
\documentclass[twoside]{article} 
\usepackage[a4paper]{geometry}
\usepackage[utf8]{inputenc} 
\usepackage[T1]{fontenc} 
\usepackage{RR}

\usepackage{todonotes,stmaryrd}
\usepackage{cite}
\usepackage{xspace}
\usepackage{amssymb,amsmath,mathtools,amsthm}
\usepackage{url}
\usepackage{algorithm}
\usepackage[noend]{algorithmic}
\usepackage{float}
\usepackage{subfigure}

\usepackage{enumitem}
\usepackage{multirow}

\usepackage{hyperref}

\graphicspath{{figs/}}

\newtheorem{theorem}{Theorem}

\newif\ifdraft\drafttrue

\ifdraft
\newcommand\todos[1]{\todo[inline]{TODO (all): #1}}
\newcommand\ma[1]{\todo[color=green!40,inline]{TODO (Mathieu): #1}}

\newcommand\gb[1]{\todo[color=yellow!40,inline]{TODO (Guillaume): #1}}
\newcommand\rb[1]{\todo[color=blue!40,inline]{TODO (Razieh): #1}}

\else
\newcommand\todos[1]{}
\newcommand\ma[1]{}

\newcommand\gb[1]{}
\newcommand\rb[1]{}

\fi

\newcommand{\ie}{\emph{i.e.}\xspace}
\newcommand{\eg}{\emph{e.g.}\xspace}

\newcommand{\etal}{\emph{et al.}\xspace}

\newcommand{\myexample}[1]{	\emph{\small Example: #1}}
\renewcommand{\myexample}[1]{\emph{Example.} #1}
\newtheorem{mydef}{Definition}
\newcommand{\feature}[1]{\emph{#1}}


\RRdate{February 2015}

\RRauthor{
Guillaume B\'{e}can\thanks[i1]{Inria/IRISA, University of Rennes 1, France}
\and Razieh Behjati\thanks[i2]{SIMULA, Norway}
\and Arnaud Gotlieb\thanksref{i2}
\and Mathieu Acher\thanksref{i1}
}
\authorhead{B\'{e}can, Behjati, Gotlieb and Acher}
\RRtitle{Synthèse de Modèles de Caractéristiques Attribués à Partir de Descriptions de Produits : Fondations} 
\RRetitle{Synthesis of Attributed Feature Models From Product Descriptions: Foundations}
\titlehead{Synthesis of Attributed Feature Models (Foundations)}
\RRresume{%
La modélisation de caractéristiques (features) est un formalisme largement utilisé pour décrire un ensemble de produits (ou configurations).
Comme une élaboration manuelle est une tâche longue et ardue, de nombreuses techniques ont été proposées pour extraire des modèles de caractéristiques à partir de types variés d'artefacts. Aucune d'elles cependant ne considère la synthèse d'attributs de caractéristiques (ou de contraintes portant sur les attributs) malgré l'utilité pratique des attributs pour documenter différentes valeurs d'un ensemble de produits. 
Dans ce rapport, nous développons un algorithme pour synthétiser des modèles de caractéristiques attribués, étant donné un ensemble de descriptions de produits.
Nous présentons des techniques correctes, complètes et paramétrables pour calculer toutes les hiérarchies possibles, groupes de caractéristiques, placements d'attributs, valeurs de domaines et contraintes.
Nous effectuons une analyse de complexité par rapport au nombre de caractéristiques, d'attributs, de configurations et de la taille des domaines. Nous évaluons aussi le passage à l'échelle de notre procédure de synthèse en utilisant des matrices de configurations générées aléatoirement.
Ce rapport est une première étape qui vise à décrire les fondations pour synthétiser des modèles de caractéristiques attribués.
} 

\RRabstract{%
Feature modeling is a widely used formalism to characterize a set of products (also called configurations). 
 As a manual elaboration is a long and arduous task, numerous techniques have been proposed to reverse engineer feature models from various kinds of artefacts. But none of them synthesize feature attributes (or constraints over attributes) despite the practical relevance of attributes for documenting the different values across a range of products.
 In this report, we develop an algorithm for synthesizing attributed feature models given a set of product descriptions. 
We present sound, complete, and parametrizable techniques for computing all possible hierarchies, feature groups, placements of feature attributes, domain values, and constraints. 
 We perform a complexity analysis w.r.t. number of features, attributes, configurations, and domain size. We also evaluate the scalability of our synthesis procedure using randomized configuration matrices. 
 %
 This report is a first step that aims to describe the foundations for synthesizing attributed feature models. 
} 

\RRmotcle{Modèle de caractéristiques, Description de produits, Rétro-ingénierie} 
\RRkeyword{Attributed feature models, Product descriptions, Reverse engineering} 
\RRprojets{DiverSE and SIMULA} 
 \RCRennes 

\begin{document}
\RRNo{8680}
\makeRR   


\section{Introduction}
Configuration options are ubiquitous. They allow users to customize or choose their product.
Options (also referred as \emph{features} or \emph{attributes}) refer to functional and non-functional aspects of a system, at different level of granularity -- from parameters in a function to a whole service. 
For example, software practitioners can activate or deactivate some functionalities and tune the energy consumption and memory footprint when building a product for embedded systems. Customers themselves can select a set of desired options to match their requirements -- for satisfying their functional needs without e.g. reaching a maximum budget.  
%

Modeling features or attributes of a given set of products is a crucial activity in \emph{Software Product Line (SPL)} engineering. 
The formalism of \emph{feature models} (FMs) is widely employed for this purpose~\cite{czarnecki2000,czarnecki2006,czarnecki2007,ACMSurveyHubaux}. FMs delimit the scope of a family of related products (i.e., an SPL) and formally document what combinations of features are supported. Once specified, FMs can be used for model checking an SPL~\cite{classen2010,cordy2013,beek2013,thuem2014}, automating product configuration~\cite{ebrahim2012,ebrahim2013,hubaux2013Book,DBLP:conf/kbse/GuoZORCAA14}, computing relevant information~\cite{benavides2010} or communicating with stakeholders~\cite{berger2014}. In many generative or feature-oriented approaches, FMs are also central for deriving software-intensive products~\cite{apel2013book}.  

\emph{Feature attributes} are a useful extension, intensively employed in practice, for documenting the different values across a range of products~\cite{hubaux2010}. 
 With the addition of attributes, optional behaviour can be made dependent not
only on the presence or absence of features, but also on the satisfaction
of constraints over domain values of attributes~\cite{cordy2013}. Recently, languages and tools have emerged to fully support attributes in feature modeling and SPL engineering (e.g., see~\cite{benavides2010,DBLP:conf/splc/EichelbergerS13,TVL,bak2010,seidl2014capturing,SPLConqueror,siegmundKKABRS12,siegmund2013,guo2013, kolesnikovASSKS13,guo2014,DBLP:conf/kbse/GuoZORCAA14,alferez:hal-01023159}). 




The manual elaboration of a feature model -- being with attributes or not -- is known to be a daunting and error-prone task~\cite{FSE2013,AcherVAMOS12,acher-cleve-etal:2013,she2011,ryssel2011,bagheri2012Wikipedia,FerrariSd13,alves2008,niu2009,rashid2009,chen2005,nadi2014,czarnecki2007,she2011,janota2008,AcherVaMoS13,lopez2011,Lopez-HerrejonGBSE12,herejonFASE2013,andersen2012,she2008,she2014efficient,AcherVaMoS13,becanESE2015,LopezHerrejon2014}. 
The number of features, attributes, and dependencies among them can be very important so that practitioners can face severe difficulties for accurately modeling a set of products. 
In response, numerous \emph{synthesis} techniques have been developed for synthesising feature models~\cite{czarnecki2007,she2011,janota2008,AcherVaMoS13,lopez2011,Lopez-HerrejonGBSE12,herejonFASE2013,andersen2012,she2008,she2014efficient,AcherVaMoS13,becanESE2015,LopezHerrejon2014}.
 Until now, the impressive research effort has focused on synthesizing basic, Boolean feature models -- without feature attributes. Despite the evident opportunity of encoding quantitative information as attributes, the synthesis of attributed feature models has not yet caught attention.  
 None of the existing techniques synthesize feature attributes, domain values or constraints over attributes.

 In this report, we develop the theoretical foundations and techniques for synthesizing attributed feature models given a set of product descriptions.
 We present sound, complete, and parametrizable techniques for computing hierarchies, feature groups, placements of feature attributes, domain values, and constraints. 
We describe algorithms for computing logical relations between features and attributes. 
The synthesis is capable of taking knowledge (e.g., about the hierarchy and placement of attributes) into account so that users can specify, if needs be, a hierarchy or some placements of attributes.

 We perform a complexity analysis of our synthesis procedure with regards to the number of configurations, features, attributes, and domain values. 
 We also evaluate the scalability of the synthesis using randomized configuration matrices. 
 Our work both strengthens the understanding of the formalism and provides the basis of a tool-supported solution for synthesizing attributed feature models.

The foundations presented in this report open avenues for investigating novel reverse engineering scenarios involving attributes. 
Numerous works have developed techniques for mining and extracting features or constraints from various kinds of artefacts~\cite{FSE2013,AcherVAMOS12,acher-cleve-etal:2013,she2011,ryssel2011,bagheri2012Wikipedia,FerrariSd13,alves2008,niu2009,rashid2009,chen2005,nadi2014} (textual requirements, design models, source code, semi-structured or informal product descriptions, configurators, etc.). However, they do not support attributes despite the presence of non Boolean data in some of these artefacts. Such automated extraction techniques can be used to process different types of artefacts and eventually fed our synthesis algorithm. 





The remainder of the report is organized as follows. 
Section~\ref{sec:related} discusses related work. 
Section~\ref{sec:motivation} further motivates the need of synthesising attributed FMs.
Section~\ref{sec:problem} exposes the problem of synthesizing attributed FMs.
Section~\ref{sec:approach} presents our algorithm targeting this problem.
Section~\ref{sec:theoretical} and ~\ref{sec:practical} respectively evaluate the synthesis techniques from a theoretical and practical aspect.
In Section~\ref{sec:threats} we discuss threats to validity. Section~\ref{sec:conclusion} summarizes the contributions and describes future work.

\section{Related Work}
\label{sec:related}
Numerous works address the synthesis or extraction of FMs. 
Despite the availability of some tools and languages supporting attributes, no prior work consider the synthesis of attributed FMs. They solely focus on Boolean FMs.



\subsection{Synthesis of Feature Models} 
Techniques for synthesising an FM from a set of dependencies (e.g., encoded as a propositional formula) or from a set of configurations (e.g., encoded in a product comparison matrix) have been proposed~\cite{czarnecki2007,she2011,janota2008,AcherVaMoS13,lopez2011,Lopez-HerrejonGBSE12,herejonFASE2013,andersen2012,she2008}.

In~\cite{czarnecki2007,andersen2012,she2014efficient}, the authors calculate a diagrammatic representation of all possible FMs from a propositional formula (CNF or DNF). 
In~\cite{AcherVaMoS13}, we propose a synthesis procedure that processes user-specified knowledge for organizing the hierarchy of features. 
In~\cite{becanESE2015}, we also propose a set of techniques for synthesizing FMs that are both correct w.r.t input propositional formula and present an appropriate hierarchy.

The algorithms proposed in~\cite{lopez2011,herejonFASE2013,Lopez-HerrejonGBSE12,LopezHerrejon2014} take as input a set of configurations. The generated FM may not conform to the input configurations, that is, the FM may be an over-approximation of the configuration set. Our work aims to study whether similar properties arise in the context of attributed feature models. The approaches presented in~\cite{lopez2011,herejonFASE2013,Lopez-HerrejonGBSE12,LopezHerrejon2014} do not control the way the FM hierarchy is synthesized. 
The major drawback is that the resulting hierarchy is likely to be difficult to read, understand, and exploit~\cite{becanESE2015,WebFML2014}. In contrast, She \etal~\cite{she2011} proposed a heuristic to synthesize an FM presenting an appropriate hierarchy. Applied on the software projects Linux, eCos, and FreeBSD, the technique assumes the existence of feature descriptions. 
Janota \etal~\cite{janota2008} developed an interactive editor, based on logical techniques, to guide users in synthesizing an FM from a propositional formula.
In prior works~\cite{becanESE2015,WebFML2014,AcherVaMoS13}, we develop techniques for taking the so-called ontological semantics into account when synthesizing feature models. 
 Our work shares the goal of interactively supporting users -- this time in the context of synthesizing attributed feature models.

\textit{Overall, numerous works exist for the synthesis of FMs but none support attributes.}

\subsection{Extraction of Feature Models} 
\label{subsec:fmextraction}
Considering a broader view, reverse engineering techniques have been proposed to extract FMs from various artefacts. 

Davril \etal~\cite{FSE2013} presented a fully automated approach, based on prior work~\cite{TSEJane}, for constructing FMs from publicly available product descriptions found in online product repositories and marketing websites such as SoftPedia and CNET. 
In~\cite{AcherVAMOS12}, a semi-automated procedure to support the transition from product descriptions (expressed in a tabular format) to FMs is proposed. 
In~\cite{acher-cleve-etal:2013}, architectural knowledge, plugins dependencies and the slicing operator are combined to obtain an exploitable and maintainable FM. 
Ryssel et al. developed methods based on Formal Concept Analysis and analyzed incidence matrices containing matching relations~\cite{ryssel2011}.



Yi \etal~\cite{yiRE2012} proposed to apply support vector machine and genetic techniques to mine binary constraints (requires and excludes) from Wikipedia. This scenario is particularly relevant when dealing with \emph{incomplete} dependencies. They evaluated their approach on two FMs of SPLOT. 
 Bagheri \etal~\cite{bagheri2012Wikipedia} proposed a collaborative process to mine and organize features using a combination of natural language processing techniques and Wordnet. Ferrari \etal~\cite{FerrariSd13} applied natural language processing techniques to mine commonalities and variabilities from brochures. 
Alves \etal~\cite{alves2008}, Niu \etal~\cite{niu2009}, Weston \etal~\cite{rashid2009} and Chen \etal~\cite{chen2005} applied information retrieval techniques to abstract requirements from existing specifications, typically expressed in natural language. 


Another related subject is constraint mining. In~\cite{acher-cleve-etal:2013}, architectural and expert knowledge as well as plugins dependencies are combined to obtain an exploitable and maintainable feature model. Nadi \textit{et al.}~\cite{nadi2014} developed a comprehensive infrastructure to automatically extract configuration constraints from C code. 

All these works present innovative techniques for mining and extracting FMs from various artefacts. However, they do not support the synthesis of attributes despite the presence of non Boolean data in some of these artefacts.
In this report, we do not consider such a broad view; we focus solely on the synthesis of AFMs. 
Automated extraction techniques can be used to process different types of artefacts and eventually fed our synthesis algorithm. 
\textit{The foundations presented in this report open avenues for investigating novel reverse engineering scenarios involving attributes.}

\subsection{Language and Tool Support for Feature Models}
There are numerous existing academic (or industrial) languages and tools for specifying and reasoning about FMs~\cite{benavides2010,berger2013}.
 
\emph{FeatureIDE}~\cite{thm2012featureide,batory2009} is an Eclipse-based IDE that supports all phases of feature-oriented software development. 
\emph{SPLConqueror} is a tool to measure and optimize non-functional properties in software product lines~\cite{SPLConqueror,siegmundKKABRS12,siegmund2013}. 
Descriptions of product lines include feature attributes such as pricing, footprint etc.
 
\emph{FAMA (Feature Model Analyser)}~\cite{benavides2010} is a framework for the automated analysis of FMs integrating some of the most commonly used logic representations and solvers. It supports attributes with integer, real and string domains. 
SPLOT~\cite{SPLOTlong} provides a Web-based environment for editing and configuring FMs.
 S2T2~\cite{S2T2} is a tool for the configuration of large FMs.
Commercial solutions (\emph{pure::variants}~\cite{purevariants} and \emph{Gears}~\cite{krueger2007biglever}) also provide a comprehensive support for product lines (from FMs to model/source code derivation). 
\emph{TVL}~\cite{TVL} is a language supporting several extensions to FMs such as attributes.
\emph{Clafer}~\cite{bak2010} is a framework mixing feature models and meta-models that supports the definition and analysis of attributes.
 Seidl \etal~\cite{seidl2014capturing} propose an extension of FMs for supporting variability in time and space. The so-called hyper FMs can be considered as a special case of AFMs in which attributes' domains are graphs of version. Alferez \etal reported an experience in the video domain involving numerous numerical values and meta-information, encoded as attributes with the VM language~\cite{alferez:hal-01023159}.

\textit{None of the existing tools propose support for synthesizing attributed FMs.}

\section{\label{sec:motivation}Background and Motivation}
This report aims to describe the foundations for synthesizing an attributed feature model from product descriptions\footnote{Roughly speaking, product descriptions are represented as a matrix, each line documenting a product along different Boolean or numerical values. More details will be given in the remainder of the report.}. 
In this section, we describe background information related to attributed feature models. 
We then explain why attributes are an essential extension to feature models in order to support the expressiveness of product descriptions.
We further motivate the need for an automated encoding of product descriptions.

\subsection{Attributed Feature Models} 
\label{sec:background}
Several formalisms supporting attributes exist~\cite{benavides2007fama,TVL,bak2010,czarnecki2004}.
In this report, we use a formalism inspired from FAMA~\cite{benavides2007fama}.
An AFM is composed of a feature diagram (see Definition~\ref{def:afd}) and an arbitrary constraint (see Definition ~\ref{def:afm}).

\begin{mydef}[Attributed Feature Diagram]
\label{def:afd}
An attributed feature diagram FD is a tuple $\langle F$, $H$, $E_{M}$, $G_{MTX}$, $G_{XOR}$, $G_{OR}$, $A$, $D$, $\delta$, $\alpha$, $RC \rangle$ such that:
\begin{itemize}
\item $F$ is a finite set of boolean features.
\item $H = (F, E)$ is a rooted tree of features where $E \subseteq F \times F$ is a set of directed child-parent edges.
\item $E_{M} \subseteq E$ is a set of edges that define mandatory features.
\item $G_{MTX}, G_{XOR}, G_{OR} \subseteq P(E\backslash{}E_m)$ are sets of feature groups. Each feature group is a set of edges. The feature groups of $G_{MTX}$, $G_{XOR}$ and $G_{OR}$ are non-overlapping and all edges in a group share the same parent.
\item $A$ is a finite set of attributes.
\item $D$ is a set of possible domains for the attributes in $A$. 
\item $\delta \in A \rightarrow D$ is a total function that assigns a domain to an attribute. 
\item $\alpha \in A \rightarrow F$ is a total function that assigns an attribute to a feature. 
\item $RC$ is a set of constraints over $F$ and $A$ that are considered as human readable and may appear in the feature diagram in a graphical or textual representation (\eg binary implication constraints can be represented as an arrow between two features).
\end{itemize}

A domain $d \in D$ is a tuple $\langle V_d, 0_d, <_d \rangle$ with $V_d$ a finite set of values, $0_d \in V_d$ the null value of the domain and $<_d$ a partial order on $V_d$.
When a feature is not selected, all its attributes bound by $\alpha$ take their null value, \ie $\forall (a,f) \in \alpha \text{ with } \delta(a) = \langle V_a, 0_a, <_a \rangle \text{, we have } \neg f \Rightarrow (a = 0_a)$.

\end{mydef}

\begin{figure}[h]
\centering
\begin{small}
\framebox[\columnwidth]{
\begin{tabular}{l c l}
\textit{readable\_constraint}			& ::= & 		\textit{bool\_factor} `\textbf{\textit{$\Rightarrow$}}' \textit{bool\_factor} ;\\
\textit{bool\_factor}			& ::= & 		{\bf feature\_name}  $|$ `{\bf \textit{$\neg$}}' \textbf{feature\_name}  $|$ \textit{rel\_expr} ;\\
\textit{rel\_expr}				& ::= & 		{\bf attribute\_name}  \textit{rel\_op} {\bf num\_literal} ;\\
\textit{rel\_op}					& ::= & 		`$>$' $|$ `$<$' $|$ `$\geq$' $|$ `{\bf $\leq$}' $|$ `{\bf =}' ;\\
\end{tabular}
}
\end{small}
\caption{\label{fig_rcGrammar} The grammar of readable constraints.}
\end{figure}

For the set of constraints in $RC$, formally defining what is human readable is essential for automated techniques.
In this report, we define $RC$ as the constraints that are consistent with the grammar in Figure~\ref{fig_rcGrammar}. Some examples of such constraints can be found in the bottom of Figure~\ref{fig:afm-b}.
We consider that these constraints are small enough and simple enough to be human readable.
In this grammar, each constraint is a binary implication, which specifies a relation between the values of two attributes or features.
Feature names and relational expressions over attributes are the boolean factors that can appear in an implication. 
Further, we only allow natural numbers as numerical literals (num\_literal).


The grammar of Figure~\ref{fig_rcGrammar} and the formalism of attributed feature diagrams (see Definition~\ref{def:afd}) are not expressive enough to represent any possible configuration matrix~\cite{schobbens2007}.
Therefore, to exactly represent a configuration matrix, we need to add an arbitraty constraint to the feature diagram. This leads to the definition of an attributed feature model (see Definition~\ref{def:afm}).

\begin{mydef}[Attributed Feature Model]
\label{def:afm}
A feature model is a pair $\langle FD, \Phi \rangle$ where FD is an attributed feature diagram and $\Phi$ is an arbitrary constraint over $F$ and $A$ that represent the constraints that cannot be expressed by $RC$.
\end{mydef}

\myexample{
Figure~\ref{fig:afm-b} shows an example of an AFM describing a product line of Wiki engines. 
The feature \feature{WikiMatrix} is the root of the hierarchy. It is decomposed in 3 features: \feature{LicenseType} which is mandatory and \feature{WYSIWYG} and \feature{LanguageSupport} which are optional.
The xor-group composed of \feature{GPL}, \feature{Commercial} and \feature{NoLimit} defines that the wiki engine has exactly 1 license and it must be selected among these 3 features.
The attribute \feature{LicensePrice} is attached to the feature \feature{LicenseType}. The attribute's domain states that it can take a value in the following set: \{0, 10, 20\}.
The readable constraints and $\Phi$ for this AFM are listed below its hierarchy (see Figure~\ref{fig:afm-b}). The first one restricts the price of the license to 10 when the feature \feature{Commercial} is selected.
}

As illustrated by the example, an AFM has two main objectives. First, it defines the valid configurations of a product line. This corresponds to the \emph{configuration semantics} of the AFM (see Definition~\ref{def:configuration_semantics}). Second, it organizes the relationships between the features and attributes which can be of different types (\eg decomposition or specialization). This corresponds to the \emph{ontological semantics} of the AFM (see Definition~\ref{def:ontological_semantics}).

\begin{mydef}[Configuration semantics]
\label{def:configuration_semantics}
A configuration of an AFM $m$ is defined as a set of selected features and a value for every attribute. The configuration semantics $\llbracket m \rrbracket$ of $m$ is the set of valid configurations.
\end{mydef}

\begin{mydef}[Ontological semantics]
\label{def:ontological_semantics}
The hierarchy $H$, the feature groups ($G_{MTX}$, $G_{OR}$ and $G_{XOR}$), the place of the attributes defined by $\alpha$ and the constraints $RC$ form the ontological semantics of an attributed feature model. It represents the semantics of features and attributes' relationships including their structural relationships and conceptual proximity.
\end{mydef}


%

\subsection{Product Descriptions and Feature Models}
Product descriptions are usually represented in tabular format, such as spreadsheets and product comparison matrices. The objective of such formats is to describe the characteristics of a set of products in order to document and differentiate them. From now on, we will use the term \textit{configuration matrix} to refer to these tabular formats (see Definition~\ref{def:matrix} for a formal description). 

For instance, consider the domain of Wiki engines that we will use as a running example throughout the report. The list of features supported by a set of Wiki engines can be documented using a configuration matrix. Figure~\ref{fig:afm-a} is a very simplified configuration matrix, which provides information about eight different Wiki engines. 

\begin{mydef}[Configuration matrix]
\label{def:matrix}
Let $\mathbf{c}_1, ..., \mathbf{c}_M$ be a given set of configurations. Each configuration $\mathbf{c}_i$ is an $N$-tuple $(c_{i,1}, ..., c_{i,N})$, where each element $c_{i,j}$ is the value of a \emph{variable} $V_j$. A variable represents either a feature or an attribute. Using these configurations, we create an $M \times N$ matrix $\mathbf{C}$ such that $\mathbf{C} = [\mathbf{c}_1, ..., \mathbf{c}_M]^t$, and call it a \emph{configuration matrix}.
\end{mydef}

\begin{figure}%
\centering
\subfigure[A configuration matrix for Wiki engines.]{
\label{fig:afm-a}
\small
\begin{tabular}{|c|c|c|c|c|c|}
\hline 
\multirow{2}{*}{\bf{Identifier}} & \multirow{2}{*}{\bf{LicenseType}} & \multirow{2}{*}{\bf{LicensePrice}} & \bf{Language} & \multirow{2}{*}{\bf{Language}} & \multirow{2}{*}{\bf{WYSIWYG}} \\ 
& & & \bf{Support} & & \\
\hline 
\bf{Confluence} & Commercial & 10 & Yes & Java & Yes \\ 
\hline
\bf{PBwiki} & NoLimit & 20 & No & -- & Yes \\ 
\hline 
\bf{SimpleWiki} & NoLimit & 10 & No & -- & Yes \\ 
\hline 
\bf{MoinMoin} & GPL & 0 & Yes & Python & Yes \\ 
\hline 
\bf{TWiki} & GPL & 0 & Yes & Perl & Yes \\ 
\hline
\bf{PerlWiki} & GPL & 10 & Yes & Perl & Yes \\ 
\hline   
\bf{MediaWiki} & GPL & 0 & Yes & PHP & No \\ 
\hline  
\bf{PHPWiki} & GPL & 10 & Yes & PHP & Yes \\ 
\hline   
\end{tabular}
}
\subfigure[One possible attributed feature model for representing the configuration matrix in Figure~\ref{fig:afm-a}]{
\label{fig:afm-b} 
\includegraphics[width=.85\textwidth]{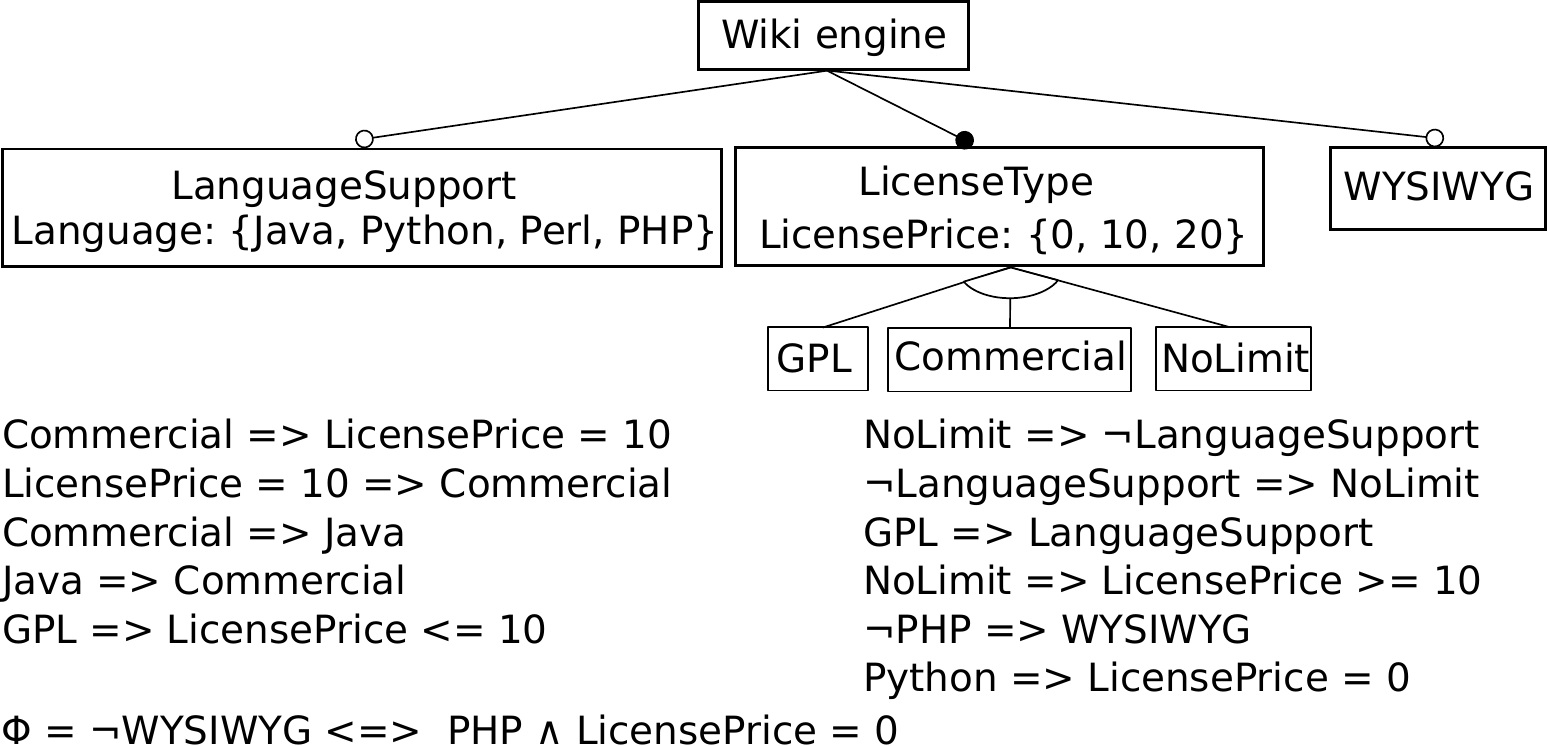}
}
\subfigure[Domain knowledge for synthesizing Figure~\ref{fig:afm-b} from Figure~\ref{fig:afm-a}.]{
\label{fig:afm-c}
\small
\begin{tabular}{|l|p{10cm}|}
\hline 
\bf{Information} & \bf{Value} \\ 
\hline 
Features & \feature{WYSIWYG}, \feature{LanguageSupport}, \feature{LicenseType}, \feature{GPL}, \feature{Commercial}, \feature{NoLimit} \\ 
\hline
Interpretation of cells & "Yes" = presence of a feature, "No" = absence of a feature\\
\hline
Root & \feature{Wiki engine} \\ 
\hline
Hierarchy (child $ \rightarrow $ parent) & 
	\feature{LanguageSupport} $ \rightarrow $ \feature{Wiki engine},
	\feature{LicenseType} $ \rightarrow $ \feature{Wiki engine},
	\feature{WYSIWYG} $ \rightarrow $ \feature{Wiki engine},
 	\feature{GPL} $ \rightarrow $ \feature{LicenseType},
 	\feature{Commercial} $ \rightarrow $ \feature{LicenseType},
	\feature{NoLimit} $ \rightarrow $ \feature{LicenseType} \\
\hline
Attributes & \feature{Language}, \feature{LicensePrice}\\ 
\hline
Domains & "--" is the null value of \feature{Language} \\
\hline
Place of attributes & $\alpha(\feature{Language}) = \feature{LanguageSupport}$, $\alpha(\feature{LicensePrice}) = \feature{LicenseType}$\\
\hline
Feature groups & $\{$\feature{GPL}, \feature{Commercial}, \feature{NoLimit}$\}$\\
\hline
Interesting values for RC & (\feature{LicensePrice}, 10)\\
\hline
\end{tabular}
}
\caption{A motivating example}
\label{fig:afm}
\end{figure}

Conceptually, a configuration matrix documents a family of related products (e.g., an SPL). 
Feature models can also be used to document a set of configurations (products).
Though feature models and configuration matrices share the same goal, their syntax and usage vastly differ. 

In contrast to configuration matrices, feature models do not provide an explicit listing (or \emph{enumerative} definition) of the set of products. 
 From this perspective, feature models are rather a compact representation of a set of products; variability information (e.g., optionality) and cross-tree constraints define the legal configurations corresponding to products. Domain analysts can quickly visualize what are the relationships between features. First, the variability information is made explicit and can be directly read and understood. Moreover the hierarchy helps to structure the information and a potentially large number of features into multiple levels of increasing detail~\cite{czarnecki2006}. 
 Besides, an enumerative definition of a set of products is only practical for relatively small sets; feature models are particularly suited when the enumeration of the set is impractical. 
 
 Configuration matrices and feature models are semantically related and aim to characterize a set of configurations. 
 The two formalisms are complementary; one would like to switch from one representation to the other.



\subsection{Synthesis of Attributed Feature Models}

Our first goal is to further the understanding of the relation between the two formalisms (\textbf{RQ1}). 
 Our second goal is to provide synthesis mechanisms to transition from configuration matrices to feature models (\textbf{RQ2}).  
 
More specifically, we want to formalize the relationship between configuration matrices and \emph{attributed} feature models (\textbf{RQ1}). 
Previous works~\cite{AcherVAMOS12,andersen2012} limit their study to Boolean constructs. However, non Boolean data (\eg numbers, strings or dates) are intensively employed in configuration matrices to document variability~\cite{ASE2013PCMs,sannier:hal-00927312,becan2014automating}. 
For instance, the price of the license is represented by an integer in Figure~\ref{fig:afm-a}.
 These kinds of data are typically encoded as attributes in a feature model. 
 Figure~\ref{fig:afm-b} shows an attributed feature model, and a set of constraints that together provide a possible representation of the comparison matrix in Figure~\ref{fig:afm-a}. In the attributed feature model, the feature \feature{LicenseType} contains an attribute named \feature{LicensePrice} that represents the price of the license. 
In general, a configuration matrix can be represented by a multiplicity of feature models. To choose a unique one among them, we use some extra information that we call \emph{domain knowledge}. The domain knowledge that is used for generating the feature model in Figure~\ref{fig:afm-b} is shown in Figure~\ref{fig:afm-c}.
 
With regards to \textbf{RQ2}, numerous works reported that the manual development of a feature model is time-consuming and error-prone~\cite{FSE2013,AcherVAMOS12,acher-cleve-etal:2013,she2011,ryssel2011,bagheri2012Wikipedia,FerrariSd13,alves2008,niu2009,rashid2009,chen2005,nadi2014,czarnecki2007,she2011,janota2008,AcherVaMoS13,lopez2011,Lopez-HerrejonGBSE12,herejonFASE2013,andersen2012,she2008,she2014efficient,AcherVaMoS13,becanESE2015,LopezHerrejon2014}. 
We assist users in synthesizing a consistent and meaningful feature model -- this time with attributes. 







\subsection{Synthesis Scenarios}

\begin{figure}
\centering
\includegraphics[scale=0.45]{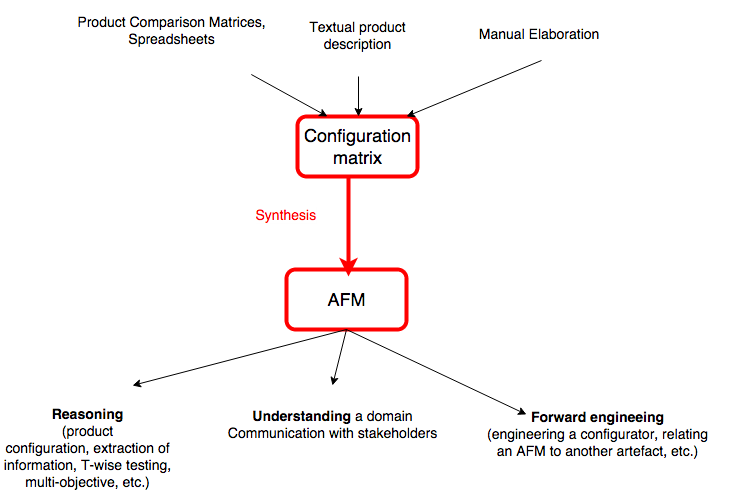}
\caption{\label{fig:motivation}Core problem: synthesis of attributed feature model from configuration matrix}
\end{figure}

Figure~\ref{fig:motivation} summarizes our objective. 
On the one hand, many kinds of artefacts or problems are amenable to the formalism of configuration matrix (see on top).  
Product comparison matrices (PCMs)~\cite{ASE2013PCMs,sannier:hal-00927312,becan2014automating} abound on the internet and are describing products along different criteria. Modulo a fixed and precise interpretation of cell values, PCMs can be encoded as configuration matrices. 
Our recent initiative for automating the formalization and providing state-of-the-art, specialized editors for PCMs can help for this purpose~\cite{becan2014automating}.
Some textual descriptions of product can also be encoded as a configuration matrix~\cite{FSE2013}. 
A manual listing of configurations can also be created and maintained. Berger \etal~\cite{berger2014} reported that practitioners use FMs to manage a set of configurations. Guo \etal compute similar configuration matrices for assessing non-functional properties of products with the goal of building predictive models~\cite{guo2013}.
The concrete scenario that emerges is as follows: automated techniques can encode some artefacts as configuration matrix and eventually fed a synthesis algorithm.  

On the other hand, the synthesis of AFMs has three main motivations:
\begin{itemize}
\item \textbf{reasoning}: efficient techniques, relying on either CSP solvers, BDD, SAT, or SMT solvers, have been developed to automatically compute relevant information~\cite{benavides2010}, model-check a product line~\cite{classen2010,cordy2013,thuem2014,beek2013}, automate product configuration~\cite{ebrahim2012,ebrahim2013,hubaux2013Book}, resolve multi-objective problems~\cite{DBLP:conf/kbse/GuoZORCAA14}, or compute T-wise configurations~\cite{GalindoISSTA14,DBLP:conf/splc/JohansenHF12,apel2013,DBLP:journals/tse/HenardPPKHT14,thuem2014}. The encoding of a configuration matrix as an AFM provides the ability to reuse state-of-the-art reasoning techniques;
\item \textbf{communication and understanding}: as any model, an AFM can be used to communicate with other stakeholders~\cite{berger2014} inside or outside a given organization. Domain analysts and product managers can also understand a given domain, market, or family of products; 
\item \textbf{forward engineering}: an AFM is central to many product line approaches and can serve as basis for a forward engineering. For instance, the engineering of a configurator~\cite{ebrahim2012,ebrahim2013} can be envisioned. The hierarchy, the explicit list of features as well as the presence of variability information make the derivation of a user interface quite immediate. Figure~\ref{fig:configurator} depicts a possible configurator that could be engineered from the AFM of Figure~\ref{fig:afm-b}. Besides an AFM can be related to other artefacts (e.g., source code) for automating the derivation of products.
\end{itemize}



\begin{figure}
\centering
\includegraphics[scale=0.25]{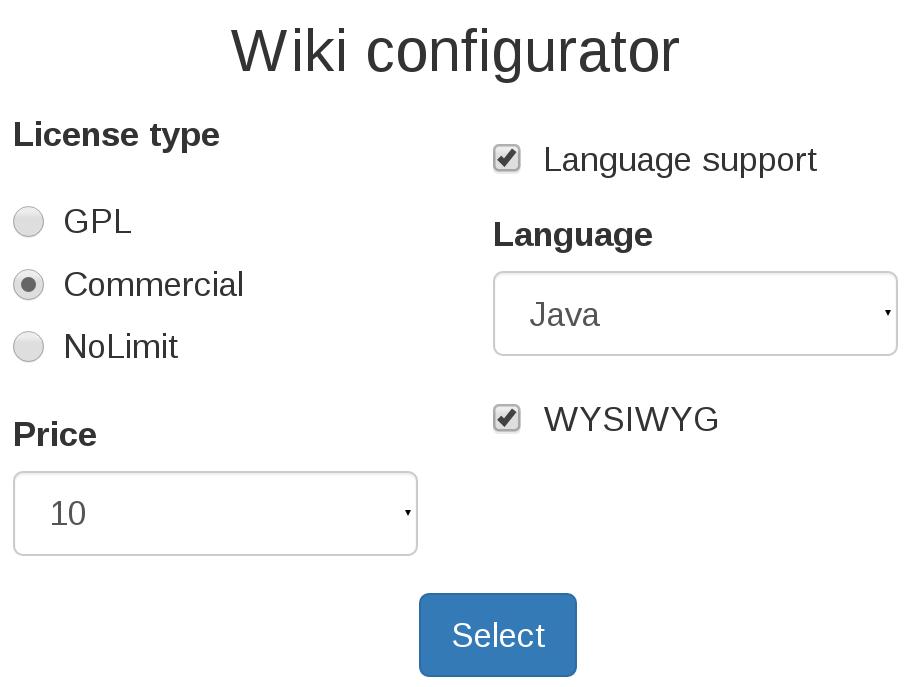}
\caption{A possible configurator generated from the AFM of Figure~\ref{fig:afm-b}}
\label{fig:configurator}
\end{figure}


\section{Synthesis Problem} 
\label{sec:problem}
The problem tackled in this report is to synthesize an AFM (see Definition~\ref{def:afm}) from a configuration matrix (see Definition~\ref{def:matrix}). Two main challenges arise: first, to preserve the configuration semantics of the input matrix; second, to produce a maximal and readable diagram.
%
%

Synthesizing an AFM that represents the exact same set of configurations (\ie configuration semantics) as the input configuration matrix is primordial. 
If the AFM is too permissive, it would expose the user to illegal configurations.
To prevent this situation, the algorithm must be sound (see Definition~\ref{def:soundness}).
Conversely, if the AFM is too constrained, it would prevent the user from selecting available configurations, resulting in unused variability.
Therefore, the algorithm must also be complete (see Definition~\ref{def:completeness}).
Figure~\ref{fig:soundness_completeness} illustrates how these two properties are related to the configuration semantics of the input configuration matrix.

\begin{mydef}[Soundness of AFM Synthesis]
\label{def:soundness}
A synthesis algorithm is sound if the resulting AFM ($afm$) represents only configurations that exist in the input configuration matrix ($cm$), \ie
$\llbracket afm \rrbracket \subseteq \llbracket cm \rrbracket$.
\end{mydef}

\begin{mydef}[Completeness of AFM Synthesis]
\label{def:completeness}
A synthesis algorithm is complete if the resulting AFM ($afm$) represents at least all the configurations of the input configuration matrix ($cm$), \ie
$\llbracket cm \rrbracket \subseteq \llbracket afm \rrbracket$.
\end{mydef}

\begin{figure}
\centering
\includegraphics[width=0.5\textwidth]{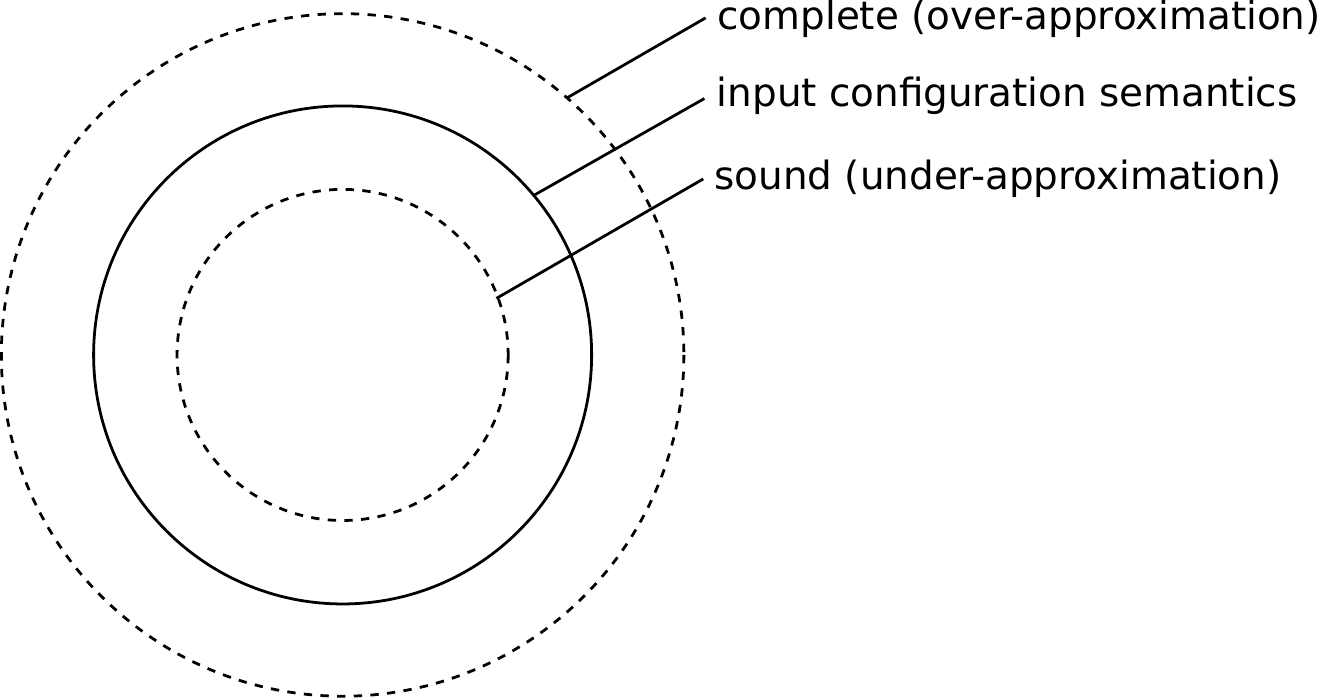}
\caption{\label{fig:soundness_completeness}Soundness and completeness w.r.t the input configuration semantics}
\end{figure}

To avoid the synthesis of a trivial AFM (\eg an AFM with the input matrix encoded in the constraint $\Phi$ and no hierarchy, \ie $E=\emptyset$), we target a maximal AFM as output (see Definition~\ref{def:maximal_afm}). Intuitively, we enforce that the feature diagram contains as much information as possible. 
Definition~\ref{def:problem} defines the AFM synthesis problem targeted in this report.

\begin{mydef}[Maximal Attributed Feature Model]
\label{def:maximal_afm}
An AFM is maximal if its hierarchy $H$ connects every feature in $F$ and if none of the following operations are possible without modifying the configuration semantics of the AFM:
\begin{itemize}
\item add an edge to $E_{M}$
\item add a group to $G_{MTX}, G_{XOR}$ or $G_{OR}$
\item move a group from $G_{MTX}$ or $G_{OR}$ to $G_{XOR}$
\item add to $RC$ a non-redundant constraint that conforms to the restrictions specified in the domain knowledge.
\end{itemize}
\end{mydef}

\begin{mydef}[Attributed Feature Model Synthesis Problem]
\label{def:problem}
Given a set of configurations $\text{sc}$, the problem is to synthesize an AFM $m$ such that $\llbracket sc \rrbracket = \llbracket m \rrbracket$ (\ie the synthesis is sound and complete) and $m$ is maximal.
\end{mydef}

\subsection{Equivalence of Attributed Feature Models}
\label{subsec:equivalence}
In Definition~\ref{def:problem}, we enforce the AFM to be maximal to avoid trivial solutions to the synthesis problem. Despite this restriction, the solution to the problem may not be unique (see Definition~\ref{def:afm_diff}).
Given a set of configurations (\ie a configuration matrix), multiple maximal AFMs can be synthesized.

\begin{mydef}[Equivalence of Attributed Feature Models]
\label{def:afm_diff}
Two AFMs ($afm_1$ and $afm_2$) are equivalent if $ \llbracket afm_1 \rrbracket = \llbracket afm_2  \rrbracket$ and if their feature diagrams (see Definition~\ref{def:afd}) are equal (\ie only $\Phi$ can vary between the two AFMs).

\end{mydef}

This property has already been observed for the synthesis of boolean FMs~\cite{she2014efficient,AcherVaMoS13,czarnecki2007,becanESE2015}. 
Several hierarchies of features can exist for the same configuration semantics. 
Extending boolean FMs with attributes exacerbates the situation. In some cases, the place of the attributes and the constraints over them can be modified without affecting the configuration semantics of the synthesized AFM.

\myexample{%
Figure~\ref{fig:afm-b} and Figure~\ref{fig:afm2} depict two AFMs representing the same configuration matrix of Figure~\ref{fig:afm-a}. They have the same configuration semantics but their attributed feature diagrams are different. In Figure~\ref{fig:afm-b}, the feature \feature{WYSIWYG} is placed under \feature{Wiki engine} while in Figure~\ref{fig:afm2}, it is placed under the feature \feature{LicenseType}. The attribute \feature{LicensePrice} is placed in feature \feature{LicenseType} in Figure~\ref{fig:afm-b}, while it is placed in feature \feature{Wiki engine} for the AFM in Figure~\ref{fig:afm2}.
}

\begin{figure}
\centering

\subfigure[\label{fig:afm2}Another attributed feature model representing the configuration matrix in Figure~\ref{fig:afm-a}]{
\includegraphics[width=0.85\textwidth]{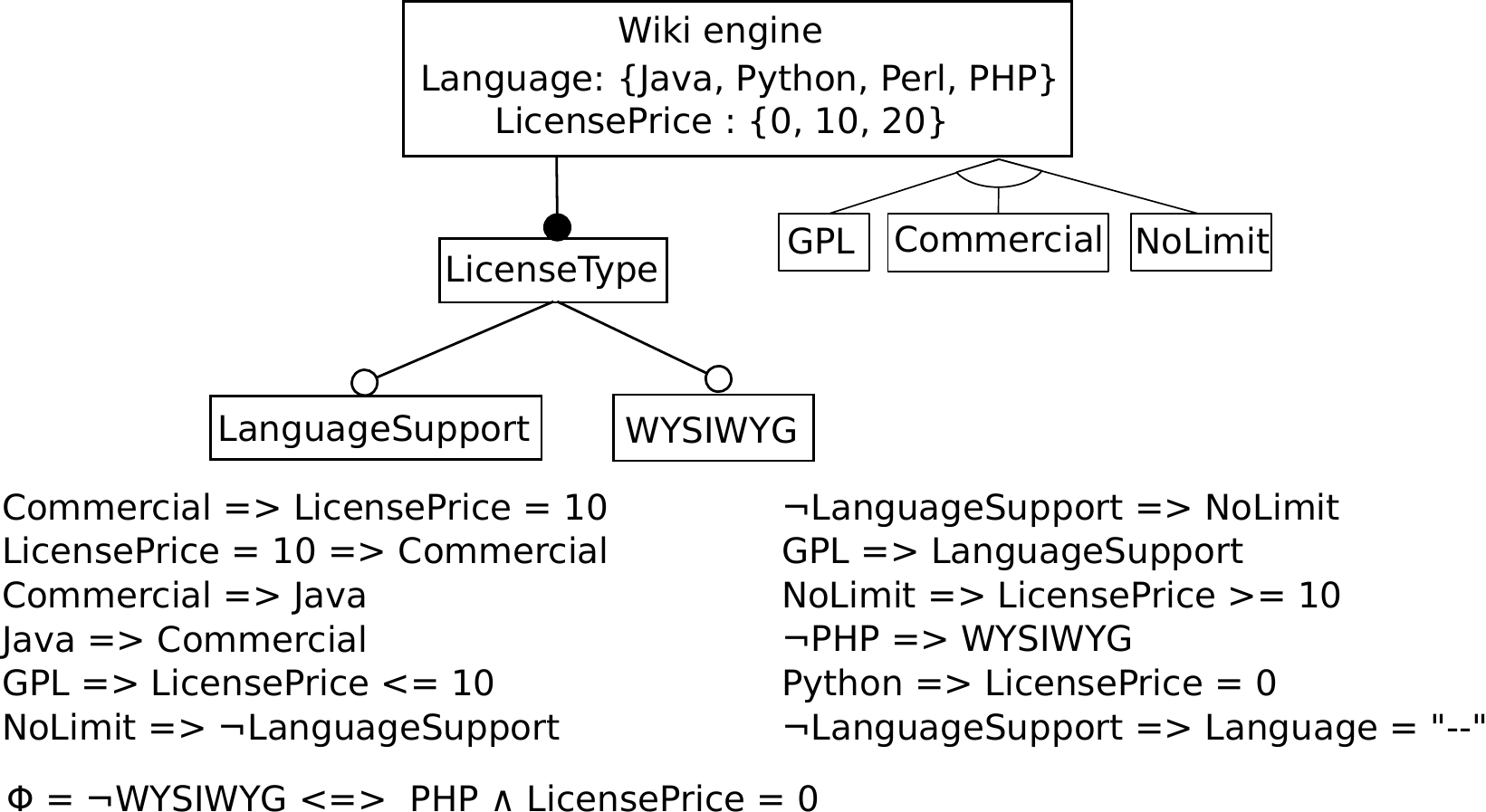}
}

\subfigure[Domain knowledge for synthesizing Figure~\ref{fig:afm2} from Figure~\ref{fig:afm-a}.]{
\label{fig:afm2-dk}
\small
\begin{tabular}{|l|p{10cm}|}
\hline 
\bf{Information} & \bf{Value} \\ 
\hline 
Features & \feature{WYSIWYG}, \feature{LanguageSupport}, \feature{LicenseType}, \feature{GPL}, \feature{Commercial}, \feature{NoLimit} \\ 
\hline
Interpretation of cells & "Yes" = presence of a feature, "No" = absence of a feature\\
\hline
Root & \feature{Wiki engine} \\ 
\hline
Hierarchy (child $ \rightarrow $ parent) & 
	\feature{LanguageSupport} $ \rightarrow $ \feature{LicenseType},
	\feature{LicenseType} $ \rightarrow $ \feature{Wiki engine},
	\feature{WYSIWYG} $ \rightarrow $ \feature{LicenseType},
 	\feature{GPL} $ \rightarrow $ \feature{Wiki engine},
 	\feature{Commercial} $ \rightarrow $ \feature{Wiki engine},
	\feature{NoLimit} $ \rightarrow $ \feature{Wiki engine} \\
\hline
Attributes & \feature{Language}, \feature{LicensePrice}\\ 
\hline
Domains & "--" is the null value of \feature{Language} \\
\hline
Place of attributes & $\alpha(\feature{Language}) = \feature{Wiki engine}$, $\alpha(\feature{LicensePrice}) = \feature{Wiki engine}$\\
\hline
Feature groups & $\{$\feature{GPL}, \feature{Commercial}, \feature{NoLimit}$\}$\\
\hline
Interesting values for RC & (\feature{LicensePrice}, 10)\\
\hline
\end{tabular}
}
\caption{Another possible attributed feature model for the motivating example.}
\end{figure}

\subsection{Synthesis Parametrization}

As shown previously, several AFMs with the same configuration semantics can be synthesized from a single matrix. To synthesize a unique AFM (see Definition~\ref{def:afm_diff}), our algorithm must take decisions.
These decisions are based on what we call the \emph{domain knowledge} which can come from heuristics, ontologies or a user of our algorithm. This domain knowledge can be provided interactively during the synthesis or as input before the synthesis.
By providing this knowledge, we parametrize the algorithm to obtain a unique AFM.

The domain knowledge can be represented as a set of functions that perform the following operations:
\begin{itemize}
\item decide if a column should be represented as a feature or an attribute
\item give the interpretation of the cells (type of the data, partial order)
\item select a possible hierarchy
\item select a place for each attribute among their legal positions
\item select a feature group among the overlapping ones.
\item provide interesting bounds for each attribute in order to compute meaningful constraints for $RC$
\end{itemize}

Examples of the information contained in domain knowledge can be found in Figures~\ref{fig:afm-c} and~\ref{fig:afm2-dk}.

\subsection{Over-approximation of the Attributed Feature Diagram}
\label{sec:overapprox}
A crucial property of the output AFM of the synthesis problem is to have the exact same configuration semantics as the input configuration matrix (see Definition~\ref{def:problem}).
As shown in Section~\ref{sec:background}, the attributed feature diagram may over-approximate the configuration semantics, i.e.,
$\llbracket cm \rrbracket \subseteq \llbracket FD \rrbracket$.
Therefore the additional constraint $\Phi$ of an AFM (see Definition~\ref{def:afm}) is required for providing an accurate representation for any arbitrary configuration matrix.

A basic strategy for computing $\Phi$ is to directly encode the configuration matrix as a constraint, i.e., $\llbracket cm \rrbracket =\ \llbracket \Phi \rrbracket$ (see Equation~\ref{eq:phi}). An advantage is that the computation is immediate and $\Phi$ is, by construction, sound and complete w.r.t. the configuration semantics.
 The disadvantage is that some constraints in $\Phi$ are likely to be redundant with the attributed feature diagram. 

\begin{equation}\label{eq:phi}
\Phi = \bigvee_{i=1}^{M}{\bigwedge_{j=1}^{N}{(V_j = c_{ij})}} \text{ with N the number of columns and M the number of rows.}
\end{equation}
 
 Ideally, $\Phi$ should express the exact and sufficient set of constraints not expressed in the attributed feature diagram, i.e.,  $\llbracket \Phi \rrbracket =
\llbracket cm \rrbracket \setminus \llbracket FD \rrbracket$.
 Synthesizing a minimal set of constraint may require complex and time-consuming computations. We consider that (1) the development of efficient techniques for simplifying $\Phi$ and (2) the investigation of the usefulness and readability of arbitrary constraints\footnote{We recall that \emph{complex} constraints as defined by the grammar~\ref{fig_rcGrammar} are already part of the synthesis. Arbitrary constraints represent other forms of constraints involving more than two features or attributes -- hence questioning their usefulness or readability by humans.} are both out of the scope of this report.  
 


\myexample{%
The AFM in Figure~\ref{fig:afm-b} exactly represents the configuration matrix of Figure~\ref{fig:afm-a}. To match the configuration semantics of the configuration matrix, the AFM relies on a constraint $\Phi$. This particular constraint cannot be expressed by an attributed feature diagram as defined in Definition~\ref{def:afd}. Therefore, if $\Phi$ is not computed the AFM would represent an over-approximation of the configuration matrix. In particular, the 2 additional configurations of Table~\ref{tab:overapprox} would also be legal for the AFM.
}

This illustrates a trade-off between the expressiveness of the AFD -- especially the expressiveness of $RC$ -- and its over-approximation w.r.t the configuration matrix. 
Quantifying the difference between the two configuration semantics would be an important metric for evaluating the expressiveness of the AFD. We leave it as future work.

\begin{table}
\centering
\small
\begin{tabular}{|c|c|c|c|c|}
\hline 
{\bf LicenseType} & {\bf LicensePrice} & {\bf LanguageSupport} & {\bf Language} & \bf{WYSIWYG} \\ 
\hline 
GPL & 0 & Yes & PHP & Yes \\ 
\hline
GPL & 10 & Yes & PHP & No \\ 
\hline  
\end{tabular} 
\caption{\label{tab:overapprox}An example of the over-approximation of the AFM in Figure~\ref{fig:afm-b} if $\Phi$ is ignored}
\end{table}

\section{Synthesis Algorithm}
\label{sec:approach}
The input of our algorithm is a configuration matrix and some domain knowledge. 
It returns a maximal AFM. The domain knowledge parametrizes the synthesis in order to synthesize a unique AFM.
The synthesis is divided into two parts. First, we synthesize the AFD, then we compute an additional constraint $\Phi$ to ensure the soundess of our algorithm.
In this section, we focus only on the first part. A high-level description of the algorithm for synthesizing an AFD is presented in Algorithm~\ref{alg_synthesis}.
 For the second part, we use Equation~\ref{eq:phi} to compute $\Phi$. 
(As previously stated in Section~\ref{sec:overapprox}, the simplification of $\Phi$ w.r.t. diagram is out of scope of this report.)

\algsetup{
linenodelimiter= 
}
\begin{algorithm}
\caption{\sc \small Attributed Feature Diagram Synthesis}
\begin{algorithmic}[1]
\label{alg_synthesis}
\REQUIRE A configuration matrix $\mathbf{MTX}$ and domain knowledge $\mathbf{DK}$
\ENSURE An attributed feature diagram $\mathbf{AFM}$
\newline
Extract the features, the attributes and their domains\\

\STATE $(F, A, D, \delta) \leftarrow $ extractFeaturesAndAttributes($MTX, DK$) 
\newline
\newline
Compute binary implications\\
\STATE $BI \leftarrow $ computeBinaryImplications($MTX$)
\label{ln_bicomp}
\newline
\newline
Define the hierarchy\\
\STATE $(BIG, MTXG) \leftarrow $ computeBIGAndMutexGraph($F, BI$)
\label{ln:big}
\STATE $H \leftarrow $ extractHierarchy($BIG, DK$)
\label{ln:hierarchy}
\STATE $\alpha \leftarrow $ placeAttributes($BI, F, A, DK$)
\label{ln:placeAttributes}
\newline
\newline
Compute the variability information\\
\STATE $E_M \leftarrow $ computeMandatoryFeatures($H, BIG$)
\STATE $FG \leftarrow $ computeFeatureGroups($H, BIG, MTXG, DK$)
\newline
\newline
Compute cross tree constraints
\STATE $RC \leftarrow $ computeConstraints($BI, DK, H, E_M, FG$)
\label{alg:constraints}
\newline
\newline
Create the attributed feature diagram\\
\RETURN $AFD(F$, $H$, $E_{M}$, $FG$, $A$, $D$, $\delta$, $\alpha$, $RC)$
\end{algorithmic}
\end{algorithm}

%

\subsection{Extracting Features and Attributes}
The first step of the synthesis algorithm is to extract the features ($F$), the attributes ($A$) and their domains ($D, \delta$).
This step essentially relies on the domain knowledge to decide how each column of the matrix must be represented as a feature or an attribute.
%
For each feature, the domain knowledge specifies which values in the corresponding column indicate the presence of the feature, and which values map to the absence of the feature. 
%
For each attribute, all the disctinct values of its corresponding column form the first part of its domain ($V_d$). The other parts, the null value $0_d$, and the partial order $<_d$, are computed according to the domain knowledge.
At the end, we discard all the dead features, \ie features that are always absent.

\myexample{Let's consider the variable \feature{LanguageSupport} in the configuration matrix of Figure~\ref{fig:afm-a}. Its domain has only 2 possible values: \textit{Yes} and \textit{No}. According to our knowledge about Wiki engines, these values represents boolean values. Therefore, the variable \feature{LanguageSupport} is identified as a feature. Following the same process, \feature{WYSIWYG} and \feature{LicenseType} are also identified as a feature and the other variables are identified as attributes. 
}

\subsection{Extracting Binary Implications}
An important step of the synthesis is to extract binary implications between features and attributes. To compute all possible implications, we rely on the formal definition of the input configuration matrix (see Definition~\ref{def:matrix}).
We recall that a configuration matrix is a $M \times N$ matrix $\mathbf{C}$.
Each variable $V_j$ of the matrix takes its value from a domain $D_j \in D$ defined by equation~\ref{eq_dom}.

\begin{equation}\label{eq_dom}
D_j = \{c_{i,j} | 1 \leq i \leq M\}
\end{equation}

Intuitively, the domain of variable $V_j$ is the set of all values that appear in the $j$th column of matrix $\mathbf{C}$.

\myexample{
Figure~\ref{fig:afm-a} shows a configuration matrix containing 5 variables and 8 configurations. The first configuration, named \textit{Confluence} is $\{$Commercial, 10, Yes, Java, Yes$\}$. The domain of the variable \feature{LicensePrice} is $\{0,10,20\}$.
}
%

A binary implication $B_{i,j,u,S}$ is a predicate, defined by equation~\ref{eq:binary_implication}, that maps $N$-tuples (e.g., rows of the matrix) to truth values (i.e., true and false). In binary implication $B_{i,j,u,S}$, $i$ and $j$ are integer values in $[1..N]$, $u$ belongs to $D_i$, and $S$ is a subset of $D_j$.
\begin{equation}
\label{eq:binary_implication}
B_{i,j,u,S}(a_1, ..., a_N) = (a_i = u \Rightarrow (a_j \in S))
\end{equation}

We use $BI(\mathbf{C})$ to denote the set of all binary implications that are valid for a configuration matrix $\mathbf{C}$.
\begin{align*}
BI(\mathbf{C}) &= \{B_{i,j,u,S}(a_1, ..., a_N) =  (a_i = u \Rightarrow (a_j \in S))| \\
& \qquad 1 \leq i, j \leq N, i \neq j, u \in D_i, S \subseteq D_j, \\
& \qquad \forall 1 \leq k \leq M.~ B_{i,j,u,S}(\mathbf{c}_k) = true
\}
\end{align*}

%
%
%

Given a configuration matrix $\mathbf{C}$, Algorithm~\ref{alg_constExtract} computes the set of all binary implications. 
First the algorithm initializes the set of binary implications $BI$ with an empty set.
Then, it iterates over all pairs $(i,j)$ of columns and all configurations $c_k$ in $\mathbf{C}$. 
The objective of the inner loop is to compute the set $S(i,j,c_{k,i})$, which contains the values of those cells in column $j$ for which the corresponding cell in column $i$ is equal to $c_{k,i}$.

\algsetup{
linenodelimiter= 
}
\begin{algorithm}
\caption{\sc \small computeBinaryImplications}
\begin{algorithmic}[1]
\label{alg_constExtract}
\REQUIRE A configuration matrix $\mathbf{C}$
\ENSURE A set of binary implications $BI$\\
\STATE $BI \leftarrow \emptyset$


\FORALL {$(i,j)$ such that $1 \leq i,j \leq N $ and $i \neq j$}
	\label{ln:outer}
	\FORALL {$c_k$ such that $1 \leq k \leq M $}
		\label{ln:inner}
		\IF{$S(i,j,c_{k,i})$ does not exists}
			\label{ln:test_s}
			\STATE $S(i,j,c_{k,i}) \leftarrow \{c_{k,j}\}$	
			\label{ln:init}
   			\STATE $BI \leftarrow BI \cup \{\langle i, j, u, S(i,j,c_{k,i}) \rangle\}$
			
    	\ELSE
			\STATE $S(i,j,c_{k,i}) \leftarrow S(i,j,c_{k,i}) \cup \{c_{k,j}\}$		    
			\label{ln:add}
	    \ENDIF
	\ENDFOR
\ENDFOR
\RETURN $BI$
\end{algorithmic}
\end{algorithm}

In line~\ref{ln:test_s}, the algorithm tests if $S(i,j,c_{k,i})$ already exists, \ie if $c_{k,i}$ is encountered for the first time. If this is the case, $S(i,j,c_{k,i})$ is initialized with the value of column $j$ for the current configuration $c_k$. Then, a new binary implication is created and added to the set $BI$.
If $S(i,j, c_{k,j})$ already exists, the algorithm simply adds the value of column $j$ for the current configuration $c_k$ to the set $S(i,j,c_{k,i})$.
At the end of the inner loop, $BI$ contains all the binary implications of the pair of columns $(i,j)$.

\subsection{Defining the Hierarchy}
The hierarchy $H$ of an AFD is similar to the hierarchy of a boolean feature model. It is a rooted tree of features such that $\forall (f_1, f_2) \in E, f_1 \Rightarrow f_2$, \ie each feature implies its parent.
As a result, the candidate hierarchies, whose parent-child relationships violate this property can be eliminated upfront.

\myexample{
In Figure~\ref{fig:afm-b}, the feature \feature{GPL} implies its parent feature \feature{LicenseType}.
However, \feature{GPL} could not be a child of \feature{Commercial} as they are mutually exclusive in the configuration matrix of Figure~\ref{fig:afm-a}.
}

To define the hierarchy of the AFD, we rely on the \emph{Binary Implication Graph} (BIG) of a configuration matrix (see Definition~\ref{def:BIG}) to guide the selection of legal hierarchies. The BIG represents every implication between two features of a formula, thus representing every possible parent-child relationships a legal hierarchy can have.
Therefore, we can promote a rooted tree inside the BIG as the hierarchy of the AFD. This step is performed according to the domain knowledge.

To compute the BIG, we simply iterate over the binary implications ($BI$) computed previously.
For each constraint in $BI$ that represents an implication between two features, we add an edge in the BIG.
As we iterate over $BI$, we take the opportunity to compute the mutex graph (see Definition~\ref{def:mutex_graph}) which will be used in the computation of feature groups.

\begin{mydef}[Binary Implication Graph (BIG)]
\label{def:BIG} 
A binary implication graph 
of a configuration matrix $\mathbf{C}$
is a directed graph $(V_{BIG}, E_{BIG})$ where $V_{BIG} = F$ and $E_{BIG} = \{(f_i,f_j) \mid `f_i \Rightarrow f_j' \in BI(\mathbf{C})\}$.
\end{mydef}

After choosing the hierarchy of features, we focus on the place of the attributes.
An attribute $a$ can be placed in a feature $f$ if $\neg f \Rightarrow (a = 0_a)$. 
As a result, the candidate features which verify this property are considered as legal positions for the attribute. To compute legal positions, we iterate over the binary implications ($BI$) to check the previous property for each attribute.
Here again, we promote, according to the domain knowledge, one of the legal positions of each attribute.

\myexample{
In Figure~\ref{fig:afm-a}, the attribute \feature{Language} has a domain \textit{d} with "--" as its null value, \ie $0_{d} = \text{"--"}$. This null value restricts the place of the attribute. As defined in Definition~\ref{def:afd}, an attribute can be placed in a feature if the attribute takes its null value when the feature is not selected. This property holds for the attribute \feature{Language} and the feature \feature{LanguageSupport}. However, the configuration \textit{MediaWiki} forbids the attribute to be placed in feature \feature{WYSIWYG}. The value of \feature{Language} is not equal to its null value while \feature{WYSIWYG} is not selected.
}

\subsection{Computing the Variability Information}
As the hierarchy of an AFD is made of features only, attributes do not impact the computation of the variability information (optional/mandatory features and feature groups).
Therefore, we can rely on algorithms that can be applied on boolean FMs.

First, let's focus on the computation of mandatory features.
In an AFD, mandatory features are implied by their parents.
To check this property we rely on the BIG as it represents every possible implication between two features. 
For each edge $(c, p)$ in the hierarchy, we check that the inverted edge $(p, c)$ exists in the BIG.
If this is the case, we add this edge to $E_M$.

For computing feature groups, we reuse algorithms from the synthesis of boolean FMs. In the following, we briefly describe the computation of each type of group. Further details can be found in~\cite{andersen2012,she2014efficient}.

For mutex-groups ($G_{MTX}$), we compute a mutex graph (see Definition~\ref{def:mutex_graph}) that contains an edge whenever two features are mutually exclusive.
As a result, the mutex-groups are the maximum cliques of this mutex graph.
The computation of the mutex graph is performed during the computation of the BIG by iterating over the binary constraints ($BI$).

\begin{mydef}[Mutex Graph]
\label{def:mutex_graph} 
A mutex graph of a configuration matrix $\mathbf{C}$ is an undirected graph $(V_{MTX}, E_{MTX})$ where $V_{MTX} = F$ and $E_{MTX} = \{(f_i,f_j) \mid ` f_i \Rightarrow \neg f_j' \in BI(\mathbf{C})\}$.
\end{mydef}

For or-groups ($G_{OR}$), we translate the input matrix to a binary integer programming problem. Finding all solutions to this problem results in the list of or-groups.

For xor-groups ($G_{XOR}$), we simply list the groups that are both mutex and or-groups.
However, the computation of or-groups might be time consuming as we will see in Sections~\ref{sec:theoretical} and~\ref{sec:practical}. If this computation is disabled, we offer an alternative technique for computing xor-groups. The alternative technique consists in checking, for each mutex-group, that its parent implies the disjunction of the features of the group. 
For that, we iterate over the binary implications ($BI$) until we find that the property is inconsistent.

To ensure the maximality of the resulting AFM, we discard any mutex or or-groups that is also an xor-group.

Finally, the features that are not involved in a mandatory relation or a feature group are considered optional.

\subsection{Computing Cross Tree Constraints}
The final step of the AFD synthesis algorithm is to compute cross tree constraints ($RC$) in order to further restrict the configuration semantics of the AFD.
We generate 3 kinds of constraints: \textit{requires}, \textit{excludes} and \textit{complex} constraints. 

A requires constraint represents an implication between two features. All the implications contained in the BIG (\ie edges) that are not represented in the hierarchy or mandatory features, are promoted as requires constraints.

\myexample{In Figure~\ref{fig:afm-b}, the implication \feature{Commercial} $ \Rightarrow $ \feature{Java} is not represented in the hierarchy nor in mandatory features. As a consequence, this constraint is added to $RC$ and appears below the hierarchy of the FM.}

Excludes constraints represent the mutual exclusion of two features. Such constraints are contained in the mutex graph. As the previously computed mutex-groups may not represent all the edges of the mutex graph, we promote the remaining edges of the mutex graph as excludes constraints.

\myexample{In Figure~\ref{fig:afm-b}, the features \feature{NoLimit} and \feature{LanguageSupport} are mutually exclusive but they are not part of a mutex group. Therefore, the excludes constraint \feature{NoLimit} $ \Rightarrow \neg$\feature{LanguageSupport} is added to $RC$ in order to represent this relation.}

Finally, complex constraints are all the constraints following the grammar described in Figure~\ref{fig_rcGrammar} and involving attributes.
First, we transform each constraint refering to one feature and one attribute to a constraint that respects the grammar of $RC$.
Then, we focus on constraints in $BI$ that involve two attributes and we merge them according to the domain knowledge. 

The domain knowledge provides the information required for merging binary implications as $(a_i, k)$ pairs, where $a_i$ is an attribute, and $k$ belongs to $D_i$. Using the pair $(a_i, k)$, we partition the set of all binary implications with $a_i =u$ on the left hand side of the implication into three categories:  those with $u < k$, those with $u = k$, and those with $u >k$. Let $b_{j,1}, b_{j,2}, ..., b_{j,p}$ be all such binary implications, belonging to the same category, and involving $a_j$ (i.e., each $b_{j,r}$ is of the form $(a_i = u_r \Rightarrow a_j \in S_r)$). We merge these binary implications into a single one: $(a_i \in \{u_1, u_2, ..., u_p\} \Rightarrow a_j \in S_1 \cup S_2 \cup ... \cup S_p)$.
%
%
Finally, if the merged constraints can be expressed with the grammar of $RC$, we add them to $RC$. Otherwise, we discard them.

\myexample{%
From the configuration matrix of Figure~\ref{fig:afm-a}, we can extract the following binary implication: $\feature{GPL} \Rightarrow \feature{LicensePrice} \in \{0, 10\}$. 
We also note that the domain of \feature{LicensePrice} is $\{0, 10 ,20\}$. Therefore, the right side of the binary implication can be rewriten as \feature{LicensePrice} $ <= 10$. As this constraint can be expressed by the grammar of $RC$, we add \feature{GPL} $ \Rightarrow $ \feature{LicensePrice} $ <= 10$ to $RC$ as shown in Figure~\ref{fig:afm-b}.
}


\section{Theoretical Evaluation}
\label{sec:theoretical}
Our algorithm presented in Section~\ref{sec:approach} addresses the synthesis problem defined in Definition~\ref{def:problem}. This definition states that the synthesized AFM must be maximal and have the same configuration semantics as the input configuration matrix. In the following, we check these two properties. We also evaluate the scalability of our algorithm by analyzing its theoretical time complexity.

\subsection{Soundness and Completeness}

Synthesizing an AFM that represents the exact same set of configurations (\ie configuration semantics) as the input configuration matrix is primordial. In order to ensure this property, the synthesis algorithm must be sound (see Definition~\ref{def:soundness}) and complete (see Definition~\ref{def:completeness}).


\subsubsection{Valid and Comprehensive Computation of Binary Implications}
\label{sec:bi}

An AFM can be seen as the representation of a set of configurations (see Defintion~\ref{def:configuration_semantics}) but also as a set of constraints over the features and attributes.
These two views are equivalent but present two opposite reasonings. The former view focuses on adding configurations while the latter focuses on removing configurations. In the following, we will switch from one view to the other in order to prove the soundness and completeness of our algorithm.

A central operation in our algorithm is the computation of all the binary implications that are present in the input configuration matrix. The soundness and completeness of our algorithm are tied to a valid and comprehensive computation of these constraints. Definition~\ref{def:soundness_bi} and Definition~\ref{def:completeness_bi} define these two properties for the computation of binary implications, respectively. 

\begin{mydef}[Validity of binary implications]
\label{def:soundness_bi}
Let $\mathbf{C}$ be an $M \times N$ configuration matrix.  A set of binary implications $BI$ is valid w.r.t. $\mathbf{C}$, iff for each $\langle i, j, u, S \rangle \in BI$ the following condition holds:
\begin{equation}
\label{eq:soundness_bi}
\forall k \in  [1..M]. ~ (c_{k, i} = u) \Rightarrow c_{k, j} \in S 
\end{equation}
\end{mydef}


\begin{mydef}[Comprehensiveness of binary implications]
\label{def:completeness_bi}
Let $\mathbf{C}$ be an $M \times N$ configuration matrix.  A set of binary implications $BI$ is comprehensive w.r.t. $\mathbf{C}$, iff:
\begin{equation}
\label{eq:completeness_bi}
\forall i, j \in  [1..N], ~ \forall u \in D_i. ~ \exists ~ \langle i, j, u, S \rangle \in BI ~\textnormal{such that}~ S \subseteq D_j: 
\end{equation}
Intuitively, for each possible combination of $i, j, $ and $u$, at least one binary implication exists in $BI$. 
\end{mydef}

Let $\mathbf{C}$ be an $M \times N$ configuration matrix. The set of binary implications $BI$ computed using Algorithm~\ref{alg_constExtract} is valid and comprehensive with respect to $\mathbf{C}$. 
The proof of these two properties is as follows:

{\bf Validity:} Let $\langle i, j, u, S \rangle$ be an arbitrary binary implication in $BI$. 
In line~\ref{ln:init} and~\ref{ln:add} of the algorithm, we add the value of column $j$ for configuration $c_k$ (noted as $c_{k,j}$) to $S(i,j,c_{k,i})$. This particular $S$ is bound to $i$, $j$ and the value of the column $i$ for the configuration $c_k$ (noted as $c_{k,i}$).
As a result, for each $k \in [1..M]$ such that $c_{k, i} = u$, $c_{k, j}$ is in $S$, and Equation~\ref{eq:soundness_bi} holds for $\langle i, j, u, S \rangle$.

{\bf Comprehensiveness:} The for statements in lines~\ref{ln:outer} and~\ref{ln:inner} iterate over all values of $i$, $j$ and $u$, and generate all distinct combinations $(i, j, u)$. For each of these combinations, line~\ref{ln:test_s} ensures that one binary implication is added to $BI$. Let $\langle i, j, u, S \rangle$ be a binary implication in $BI$. In addition, line~\ref{ln:init} and~\ref{ln:add} guarantee that the value added to $S$ is in $D_j$ and thus $S \subseteq D_j$. Therefore, Equation~\ref{eq:completeness_bi} holds for $BI$.

\subsubsection{Proof of Soundness of the Synthesis Algorithm}
According to Definition~\ref{def:soundness}, the algorithm detailed in Section~\ref{sec:approach} is sound if all the configurations represented by the AFM exist also in the configuration matrix.
To prove this property, it is equivalent to show that the constraints represented by the configuration matrix are included in the constraints represented by the AFM.

In general, the attributed feature diagram (see Definition~\ref{def:afd}) is not expressive enough to represent the constraints of the configuration matrix (see Section~\ref{sec:overapprox}). To keep the algorithm sound, we rely on the additional constraint $\Phi$ of the AFM. A simple strategy is to use Equation~\ref{eq:phi} to specify the whole configuration matrix as a constraint, and include it in the AFM as $\Phi$. Such a constraint would, by construction, restrict the configuration semantics of the AFM to a subset of the configuration semantics of the matrix. Therefore, our algorithm is sound if $\Phi$ is carefully defined. Otherwise, it may represent an overapproximation of the input configuration matrix (see Section~\ref{sec:overapprox}).

\subsubsection{Proof of Completeness of the Synthesis Algorithm}
To prove the completeness of our algorithm, we have to show that all the configurations represented by the configuration matrix exist also in the synthesized AFM (see Definition~\ref{def:completeness}). 
It is equivalent to show that all the constraints represented by the AFM are included in the constraints represented by the configuration matrix.
A first observation is that, apart from or-groups, xor-groups and $\Phi$, the AFM is constructed from a set of binary implications computed by Algorithm~\ref{alg_constExtract} (step~\ref{ln_bicomp} of Algorithm~\ref{alg_synthesis}). 

As proved in Section~\ref{sec:bi}, the computed binary implications are valid. Therefore, all the constraints presented by these implications are included in the configuration matrix.
All the subsequent computations relying on the binary implications do not introduce new constraints. They only reuse and transform these binary implications into feature modeling concepts.

For or-groups and xor-groups, we reuse existing techniques that rely on the input configuration matrix. These techniques are designed for a similar context of FM synthesis and ensure the completeness of the resulting FM~\cite{andersen2012,she2014efficient}. 
In our context of AFMs, the features and the attributes are separated in two distinct sets, repesctively $F$ and $A$. Moreover, the attributes do not enter in the definition of the hierarchy $H$ or the feature groups. As a consequence, the adaptation of techniques for computing or-groups and xor-groups to the context of AFMs is straightforward and guarantees the completeness.


Finally, $\Phi$ as computed by Equation~\ref{eq:phi} is equivalent to the configuration matrix. Therefore, adding such a $\Phi$ to AFM does not exclude from it any of the configurations represented by the configuration matrix.

Overall, each step of the algorithm of Section~\ref{sec:approach} guarantees the property of completeness. Therefore, the whole synthesis process ensures completeness.

\subsection{Maximality}
Another challenge is to ensure the maximality of the synthesized AFM in order to avoid trivial answers to the AFM synthesis problem (see Section~\ref{sec:approach}). In this section, we prove that Algorithm~\ref{alg_synthesis} produces a maximal attributed feature model. This is stated in Theorem~\ref{theorem_max}.

\begin{theorem}[Maximality of the configuration semantics]
\label{theorem_max}
Let $M$ be a configuration matrix, and $AFM$ be a sound and complete synthesis of $M$, generated using Algorithm 1. Then, $AFM$ is also maximal.
\end{theorem}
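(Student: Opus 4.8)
The strategy is to verify, one operation at a time, that none of the four modifications listed in Definition~\ref{def:maximal_afm} can be applied to the synthesized $AFM$ without changing its configuration semantics, and additionally that the hierarchy $H$ connects every feature of $F$ (the latter being immediate from line~\ref{ln:hierarchy}: \texttt{extractHierarchy} promotes a rooted spanning tree of the BIG, which spans $F$ by construction since every non-dead feature is a vertex of the BIG, and dead features are discarded during extraction). So the core of the argument is the case analysis on the four bullet operations. Throughout, I would work in the ``set of constraints'' view rather than the ``set of configurations'' view, as suggested in Section~\ref{sec:bi}: adding an edge/group/constraint can only \emph{remove} configurations, so to show an operation changes the semantics it suffices to exhibit one configuration of $\llbracket cm\rrbracket$ (equivalently, one row of the matrix, by completeness of the BI computation) that the new element would exclude.

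\textbf{Key steps.} First, for \emph{adding an edge to $E_M$}: $E_M$ is computed (in \texttt{computeMandatoryFeatures}) to contain exactly those hierarchy edges $(c,p)$ whose reverse $(p,c)$ lies in the BIG. If $(c,p)\in E\setminus E_M$, then $(p,c)\notin E_{BIG}$, so by comprehensiveness and validity of $BI$ (the property proved just before Section~4.1.2) there is a row $k$ with $c_{k}$ selecting $p$ but not $c$; that row is in $\llbracket cm\rrbracket=\llbracket AFM\rrbracket$ but would violate the new mandatory edge. Second, for \emph{adding a group}: here I would invoke that \texttt{computeFeatureGroups} reuses the synthesis procedures of~\cite{andersen2012,she2014efficient}, which are established to compute \emph{all} maximal mutex-, or-, and xor-groups of the input; hence any additional group among the children of a given parent is either not a valid group (excluded by some row) or is subsumed by / overlaps an already-present group, and adding it is forbidden by the non-overlapping requirement of Definition~\ref{def:afd} — so no \emph{new} group can be added without semantic change. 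Third, for \emph{moving a group from $G_{MTX}$ or $G_{OR}$ to $G_{XOR}$}: by construction $G_{XOR}$ already contains every group that is simultaneously a mutex- and an or-group, and the algorithm explicitly discards from $G_{MTX}$ and $G_{OR}$ any group that is also an xor-group (the ``To ensure the maximality \dots we discard'' paragraph); therefore a group remaining in $G_{MTX}$ is not an or-group and a group remaining in $G_{OR}$ is not a mutex-group, so in either case promoting it to $G_{XOR}$ adds a constraint (a disjunction, resp. a mutual-exclusion) that some row of the matrix violates. Fourth, for \emph{adding a non-redundant $RC$ constraint conforming to the domain knowledge}: the grammar of Figure~\ref{fig_rcGrammar} admits only binary implications between a feature/relational-expression and a feature/relational-expression; \texttt{computeConstraints} promotes \emph{every} BIG edge not already captured by $H$ or $E_M$ as a \emph{requires}, every mutex-graph edge not captured by a group as an \emph{excludes}, and every mergeable attribute/attribute or feature/attribute binary implication (restricted to the ``interesting bounds'' supplied by the domain knowledge) expressible in the grammar. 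Since $BI$ is comprehensive, any further constraint of the grammar that holds on the matrix is logically a consequence of the promoted ones together with $H$, $E_M$ and the groups — i.e. it is redundant — and any constraint that does \emph{not} hold on the matrix excludes some row; either way the operation in the fourth bullet is impossible.

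\textbf{Main obstacle.} The delicate point is the fourth bullet, specifically the claim that every grammar-conforming constraint holding on $cm$ is \emph{redundant} given what the algorithm already emits. For pure feature/feature requires- and excludes-constraints this is clean (BIG edges / mutex edges are exactly the valid binary feature implications, and the hierarchy plus $E_M$ plus groups only ever \emph{account for} a subset of them, the rest being promoted verbatim). The subtlety is with relational expressions over attributes: a constraint like $a_i \le k \Rightarrow a_j \ge k'$ can be valid on the matrix yet only be \emph{derivable} from the set of raw binary implications $\{a_i=u\Rightarrow a_j\in S_u\}$ after the merging step — and the merging step is itself parametrized by the ``interesting bounds'' $(a_i,k)$ in the domain knowledge, so ``maximal'' here must be read relative to those bounds (indeed Definition~\ref{def:maximal_afm}'s fourth bullet says ``conforms to the restrictions specified in the domain knowledge''). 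I would therefore phrase this step as: for each $(a_i,k)$ permitted by the domain knowledge, the algorithm merges \emph{all} binary implications in the three categories $u<k$, $u=k$, $u>k$ and adds the result whenever the grammar allows; consequently no further grammar-expressible, domain-knowledge-conforming, non-redundant constraint over those attributes can exist. Making this last implication fully rigorous — that ``merge-all-then-express'' leaves no expressible constraint behind — is the part that needs the most care, and I would support it by a short lemma showing that the conjunction of the merged implications for a pair $(a_i,a_j)$ is logically equivalent, over the rows of $cm$, to the conjunction of all valid binary implications from $a_i$ to $a_j$, so that any valid grammar constraint between $a_i$ and $a_j$ that is not already emitted is entailed by those and hence redundant.
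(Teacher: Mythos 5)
Your proposal follows essentially the same route as the paper's proof: a case analysis over the conditions of Definition~\ref{def:maximal_afm}, using the spanning-tree-of-the-BIG argument for connectivity of $H$, the comprehensiveness/validity of the computed binary implications for $E_M$ and $RC$, delegation to the boolean-FM group-synthesis techniques of~\cite{andersen2012,she2014efficient} for the feature groups, and the promote-then-discard construction for $G_{XOR}$. If anything you are more explicit than the paper on the delicate fourth bullet (the paper simply asserts that any additional $RC$ constraint is redundant, forbidden by the domain knowledge, or inconsistent, whereas you correctly identify that the attribute-merging step needs a small supporting lemma), but this is a refinement of the same argument rather than a different one.
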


\begin{proof} %
To prove the maximality of $AFM = \langle F, H, E_M, G_{MTX}, G_{XOR}, G_{OR}, A, D, \delta\alpha, RC \rangle$, we show that all conditions in Definition~\ref{def:maximal_afm} hold for $AFM$:

\textbf{$H$ connects every feature in $F$.}
The algorithm ensures that the synthesized hierarchy contains all the features of the configuration matrix. 
The hierarchy is a rooted tree which is extracted from the binary implication graph (BIG) in step~\ref{ln:hierarchy} of Algorithm~\ref{alg_synthesis}. 
By construction (see Definition~\ref{def:BIG}), the BIG contains all the features.
Moreover, the comprehensive computation of the binary implications (see Section~\ref{sec:bi}) ensures that the BIG represents all possible implications of the configuration matrix. 
We can define a hierarchy of an AFM as a spanning tree of the BIG. 
Therefore, every spanning tree of the BIG is a possible hierarchy for the AFM and every possible hierarchy is a spanning tree of the BIG~\cite{becanESE2015}.
The existence of a single spanning tree is not generally ensured for a binary implication graph computed from any arbitrary configuration matrix. Specifically, the BIG may contain more than a single connected component. 
In such cases, we simply create a new root feature $r$ in the BIG, and connect every feature in $F$ to $r$. As a consequence, line 3 of the algorithm always generates a connected graph. 
The \textsf{extractHierarchy} routine in line 4 of Algorithm 1, uses the domain knowledge to chose one spanning tree of the BIG as $H$.
Therefore, $H$ is guaranteed to be a tree containing all features in $F$.

\textbf{Adding an edge to $E_m$ changes the configuration semantics of $AFM$ ($\llbracket AFM \rrbracket$).}
The algorithm ensures that no edge can be added to the set of mandatory edges, $E_m$, without changing the configuration semantics of the AFM (i.e., adding or removing items from $\llbracket AFM \rrbracket$).
$E_m$ represents every edge $(f_1,f_2)$ of the hierarchy $H$ such that $f_1 \Leftrightarrow f_2$.
In line 6 of the algorithm, for each edge $(f_1,f_2)$ in the hierarchy $H$, we add an edge $(f_2,f_1)$ to $E_m$, iff $(f_2,f_1)$ exists in the BIG.
As mentioned previously, the BIG exactly represents all possible implications of the configuration matrix.
Adding any other edge to $E_m$ implies that the added edge does not exist in BIG. 
Therefore, no edge can be further added to $E_m$ without changing the configuration matrix, which corresponds to the configuration semantics of the AFM.


\textbf{Adding an item to any of $G_{MTX}$, $G_{OR}$, or $G_{XOR}$ changes $\llbracket AFM \rrbracket$.}
The computation of feature groups is entirely performed with existing techniques that are designed for the synthesis of boolean FMs. 
Our definition of a maximal AFM (see Definition~\ref{def:afd} and~\ref{def:maximal_afm}) preserves the definition of feature groups and the notion of maximality used in boolean FMs.
Moreover, the attributes are clearly separated from features and are not part of the feature groups.  Therefore, the techniques designed for boolean FMs are not impacted by the attributes and keep their properties (soundness, completeness and maximality).
As we are using techniques that are designed for synthesizing sound, complete and maximal boolean FMs~\cite{andersen2012,she2014efficient}, no group can be added to $G_{MTX}$, $G_{OR}$ $G_{XOR}$ without changing the configuration semantics of the $AFM$.

\textbf{Moving any item from $G_{MTX}$ or $G_{OR}$ to $G_{XOR}$ changes $\llbracket AFM \rrbracket$.}
An xor-group is a feature-group that is both a mutex-group and an or-group. 
In Algorithm~\ref{alg_synthesis}, we promote every mutex-group that is also an or-group as an xor-group. 
Moreover, the previous property of maximality ensures that all the possible feature groups are computed.
Therefore, moving another group from $G_{MTX}$ or $G_{OR}$ to $G_{XOR}$ is impossible without changing the configuration semantics of the AFM.

\textbf{Adding a non-redundant constraint to $RC$ changes $\llbracket AFM \rrbracket$.}
As shown in the grammar of $RC$ in Figure~\ref{fig_rcGrammar}, all readable constraints are implications between values of features or attributes. 
We compute the set of all such binary implications in line 2 of the algorithm. As proven in Section~\ref{sec:bi}, Algorithm~\ref{alg_constExtract} perform a comprehensive computation of binary implications in BI. Any additional constraint is therefore, either redundant, forbidden by the domain knowledge (\ie its numerical literal is not part of the interesting values) or inconsistent with configuration matrix $M$. The latter implies that adding a non-redundant constraint to $RC$ would change the configuration semantics of $AFM$.


Overall, every property of the maximality is verified in the resulting AFM of our algorithm.

\end{proof}

\subsection{Complexity analysis}
\label{sec:complexity_analysis}
The manual elaboration of an AFM is error-prone and time-consuming. As we showed previously, the soundness and completeness of our algorithm address the first problem. For the second problem, we have to evaluate the time complexity of our algorithm. 
It depends on 3 characteristics of the input configuration matrix:
\begin{itemize}
\item number of variables (features + attributes): $v = f + a$
\item number of configurations: $c$
\item maximum domain size (\ie maximum number of distinct values for a variable): $d$
\end{itemize}

In the following, we analyze the complexity for each step of Algorithm~\ref{alg_synthesis} and~\ref{alg_constExtract}.

\subsubsection{Extracting Features and Attributes}

The first step of Algorithm~\ref{alg_synthesis} is the extraction of features, attributes and their domains. This extraction simply consists in running through the configuration matrix to gather the domains and identify features and attributes. The size of the matrix is characterized by the number of variables (columns) and the number of configurations (rows). Therefore, the complexity is $O(v.c)$.

\subsubsection{Extracting Binary Implications}
\label{sec_CompBI}
The extraction of binary implications is computed by Algorithm~\ref{alg_constExtract} whose complexity is as follows.
The outer for loop of lines~\ref{ln:outer}-\ref{ln:add} iterates over all pairs of variables in the configuration matrix ($O(v^2)$).
The inner for loop of lines~\ref{ln:inner}-\ref{ln:add} iterates over all configurations ($O(c)$).
In this loop, we perform 2 types of operations: checking the existence of the set $S(i,j,c_{k,i})$ in line \ref{ln:test_s} and adding one element to a set.
Checking the existence of $S(i,j,c_{k,i})$, for a given $i$ and $j$, involves searching for an element in a map that has potentially $d$ elements. Therefore, its complexity is $O(d)$.
The latter operation is performed on a set that accesses its elements by a hash function. Adding an element to such a set has a complexity of $O(1)$.
Overall, the complexity of the computation of binary implications is $O(v^2.c.d)$.




\subsubsection{Defining the Hierarchy}

In line~\ref{ln:big}, we iterate over all the binary implications to compute the binary implication graph and the mutex graph. For each binary implication, if it represents an implication between two features, we add an edge to the binary implication graph. If the binary implication represents a mutual exclusion between two features, we add an edge to the mutex graph. Both checks are done in a constant time. Therefore, the complexity of this step depends on the number of binary implications. As explained in Algorithm~\ref{alg_constExtract}, we create a binary implication for each pair of variables $(i,j)$ and for each value $c_{k,i}$ in the domain of the $i$. It results in a maximum of $v^2.d$ binary implications. As a consequence, the complexity of line~\ref{ln:big} is $O(v^2.d)$.
 
In line~\ref{ln:hierarchy}, we compute the hierarchy of the AFM by selecting a tree in the binary implication graph. This step is performed by the domain knowledge (\ie by a user or an external algorithm). In~\cite{becanESE2015}, we propose an algorithm for this task which has a time complexity of $O(f.log(f^2))$.

In line~\ref{ln:placeAttributes}, we compute all the possible places for each attribute. We recall that an attribute $a$ can be placed in the feature $f$ if $\neg f \Rightarrow (a = 0_a)$, with $0_a$ being the null value of the domain of $a$.
We first initialize the possible places for all attributes as an empty set.
Then, we iterate over each binary implication $\{\langle i, j, u, S\rangle\}$ in order to add the valid places. The loop has a complexity of $O(v^2.d)$ as it represents the maximum number of binary implications.
A valid place is identified as follows.
First, we check that $i$ is a feature, $j$ is an attribute and that $u$ represents the absence of the feature $i$.
Then, we verify that $S$ is equal to the null value of the domain of $j$.
If all these properties are respected, the feature $i$ is thus a valid place for the attribute $j$ and can be added to its possible places.
These verifications have a complexity of $O(d$).
Overall, the complexity of this step is $O(v^2.d^2)$.

\subsubsection{Computing the Variability Information}
The computation of variability information is done in two steps: mandatory features and feature groups.
For detecting mandatory features, we iterate over every edge of the hierarchy. As the hierarchy is a tree containing all the features as nodes, the number of edges is exactly equal to $f-1$. For each edge, we check that the inverted edge exists in the binary implication graph. This check is performed with a complexity of $O(f)$ in our implementation. 
Overall, the complexity is $O(f^2)$.

For feature groups, we rely on existing techniques that are developed in~\cite{andersen2012,she2014efficient}. We summarize the results of the complexity analysis of these techniques.

The computation of mutex groups consists in finding maximal cliques in the mutex graph. As the mutex graph contains all the features as nodes, the complexity of this operation is $O(3^{f/3})$~\cite{Tomita200628}.

For or-groups, the algorithm relies on a binary integer programming which is an NP-complete problem.

For the computation of xor-groups, there are two alternatives. The first one assumes the computation of or-groups. For each or-group, it iterates through the mutex-groups to check if there exists an equivalent group (\ie same parent and same children).
As the number of groups is bounded by the number of features, the two iterations have a complexity of $O(f^2)$. Checking if two groups are equivalent consists in checking the equality of two sets. 
The maximum size for a group is also bounded by the number of features. The check is performed with a complexity of $O(f^2)$.
Overall, this first technique for computing xor-groups is $O(f^4)$.

The second technique do not assume the computation of or-groups. It iterates over every mutex-group and checks that, for each configuration, the parent of the group implies the disjunction of the features of the group. Checking this property depends on the size of the mutex-group, which is bounded by the number of features $f$.
Overall, the complexity of this second technique for computing xor-groups is $O(f^2.c)$

\subsubsection{Computing Cross Tree Constraints}
\label{sec_CompRC}
The computation of cross tree constraints ($RC$) is performed in three steps.

First, the \textit{requires} constraints are extracted from the binary implication graph. The algorithm iterates over every edge of the graph and checks if it exists either in the mandatory features or in the hierarchy. The binary implication graph is composed of the features as nodes and thus contains at most $f^2$ edges. In addition, the number of edges represented by the mandatory features and the hierarchy is bounded by the number of features $f$. Therefore, the complexity of this step is $O(f^3)$.

Then, the \textit{excludes} constraints are extracted from the mutex graph by looking for all the edges that are in the mutex graph but not in any mutex-group. The algorithm iterates over every edge of the mutex graph and checks that it does not exist in the mutex-groups. Like the binary implication graph, the mutex graph contains at most $f^2$ edges. Moreover, the mutex-groups represent at most $f$ edges (size of the hierarchy). Therefore, checking if an edge exists in the mutex-groups has a complexity of $O(f)$. Overall, the complexity of this step is $O(f^3)$.

Finally, the complex constraints are extracted from the binary implications computed in line~\ref{ln_bicomp} of Algorithm~\ref{alg_synthesis}. 
This extraction is done in two steps. 
In the first step, the algorithm iterates over every binary implication, which results in a complexity of $O(v^2.d)$.
Then, binary implications that refer to a particular pair of attributes (e.g., $(a_i, a_j)$) are divided into three categories based on the bound specified on the first attribute, by the domain knowledge. The items in each category are then merged. 
To speed up the detection of implications that need to be merged, we store the implications in a map indexed by the pairs of attributes. Therefore, the algorithm is working on three maps that have at most $v^2$ elements. Thus, the complexity of an operation on the map is $O(v^2)$.
Merging two binary constraints $(a_i, u_1, a_j, S_1)$ and $(a_i, u_2, a_j, S_2)$ consists in computing the union of $S_1$ and $S_2$. These sets contain values of the attribute $a_j$ and have a maximum size of $d$. Therefore, the complexity for merging two binary constraints is $O(d)$ and the total complexity of this first step is $O(v^2.d(v^2 + d)) = O(v^4.d + v^2.d^2)$. 

In the second step of the extraction of constraints, we iterate over the three maps previously computed. Each map has potentially $v^2$ elements. The complexity of the loop is thus $O(v^2)$.
In these maps, the value of each element ($a_i, a_j$) is composed of the result of the previous merge operations, which is a set $S$ of at most $d$ elements.
We check that $S$ can be represented as ``$a_j$ \textsf{OP} $k$'', with \textsf{OP} a comparison operator from the grammar of $RC$ and $k$ a value of the domain of $a_j$, specifying the bound on it.
If this representation is possible, we add it as a complex constraint to $RC$. In our implementation this check has a complexity of $d^2$. 
Therefore, the complexity of the second step is $O(v^2.d^2)$.

Overall, the complexity of the computation of complex constraints is $O(v^4.d + v^2.d^2)$.


\subsubsection{Overall Complexity}
The time complexity analysis of Algorithm~\ref{alg_synthesis} shows that, apart from the computation of mutex-groups and or-groups, the synthesis of an attributed feature diagram has a polynomial time complexity. In particular, its complexity is $O(v^4.d + v^2.d^2 + v^2.c.d)$.
The complexity of the computation of mutex-groups and or-groups are exponential and NP-complete, respectively. They represent the hardest parts of Algorithm~\ref{alg_synthesis} from a theoretical point of view.
To obtain an AFM, we have to take into account the computation of $\Phi$. With our algorithm in Equation~\ref{eq:phi}, we simply iterate over each value of the configuration matrix, which results in a complexity of $O(v.c)$.

\section{Empirical Evaluation}
\label{sec:practical}
%
%

For this practical evaluation, we evaluate the time complexity of the AFD synthesis algorithm described in Algorithm~\ref{alg_synthesis}.
In the following experiments, the algorithm takes random matrices as input. We evaluate its scalability over 3 caracteristics of the input: number of features and attributes, number of configurations and the maximum number of distinct values for an attribute.

\subsection{Experimental Settings}
\label{subsec:settings}
For this evaluation, we use randomly generated matrices as input.
%
Our random matrix generator has 3 parameters :
\begin{itemize}
\item number of variables (features and attributes)
\item number of configurations
\item maximum domain size (\ie maximum number of distinct values in a column)
\end{itemize}

The type of each variable (feature or attribute) is randomly selected according to a uniform distribution. 
An important property is that our generator does not ensure that the number of configurations and the maximum domain size are reached at the end of the generation. Any duplicated configuration or missing value of a domain is not corrected. Therefore, the parameters entered for our experiments may not reflect the real properties of the generated matrices. To avoid any misinterpretation or bias, we present the concrete numbers in the following results.

Moreover, to reduce fluctuations caused by the random generator, we perform the experiments at least 100 times for each triplet of parameters.
In order to get the results in a reasonable time, we used a cluster of computers. Each node in the cluster is composed of two Intel Xeon X5570 at 2.93 Ghz with 24GB of RAM.

\subsection{Expected practical complexity}
\label{sec:expected_complexity}
The analysis of the theoretical time complexity of Algorithm~\ref{alg_synthesis} shows two hard parts (see Section~\ref{sec:complexity_analysis}).
The first hard part is the computation of mutex-groups, which has an exponential worst-case complexity.
In~\cite{she2014efficient}, She \etal show that the algorithm for computing mutex-groups takes a few seconds for FMs with up to 290 features and can scale to FMs with 5000 features. 
We also note that our dataset produces mutex graphs with 6 edges in average and 93.8\% of the mutex graphs contain absolutely no edges. In such cases, computing maximal cliques (\ie computing mutex-groups) is trivial. Therefore, we expect a non significant impact of the computation of mutex-groups on the practical complexity.

The second hard part of the algorithm is the computation of or-groups, which is NP-complete.
Unlike mutex groups, She \etal show that the computation of or-groups hardly scales~\cite{she2014efficient}. 
In Section~\ref{sec:experiment_or}, we confirm this observation by evaluating the scalability of the computation of or-groups w.r.t the number of variables.
Based on this observation, we deactivated the computation of or-groups in the rest of the experiments.

For the rest of the algorithm, the theoretical complexity is polynomial: $O(v^4.d + v^2.d^2 + v^2.c.d)$.
We note that some operations in the algorithm (\eg the computation of complex constraints) rely on the use of data structures based on hash functions. An operation on such data structure has a linear worst case complexity. However, the probability of reaching the worst case complexity is very low in practice. 
Moreover, the size of the mutex graph also impacts the practical complexity of the computation of excludes constraints which has a theoretical complexity of $O(f^3)$.
The effective size of other data structures used during the algorithm may further reduce the practical complexity.
Therefore, we expect a lower practical complexity than the theoretical one.

\subsection{Results}
We evaluate the scalability of the AFD synthesis algorithm w.r.t to the 3 parameters of our random matrix generators: number of variables, number of configurations and maximum domain size.
In the following, we present the results as figures. When applicable, we include average values using red points, and linear regression trendlines using black lines.
To illustrate how the linear regression fits the data, we also compute the pearson correlation coefficient.

\subsubsection{Or groups}
\label{sec:experiment_or}
As shown in Section~\ref{sec:theoretical}, the computation of or-groups is NP-complete and represents one of the hardest parts of the synthesis algorithm. As such, it may be time consuming and form the main performance bottleneck of the algorithm.
Our first experiment evaluates the scalability of the or-groups computation w.r.t the number of features. We measure the time needed to compute the or-groups from a matrix with 1000 configurations, a maximum domain size of 10 and a number of variables ranging from 5 to 70. To keep a reasonable time for the execution of the experiment, we set a timeout at 30 minutes. Results are presented in Figure~\ref{fig:scalability_or}.

\begin{figure}
\centering
\includegraphics[scale=0.4]{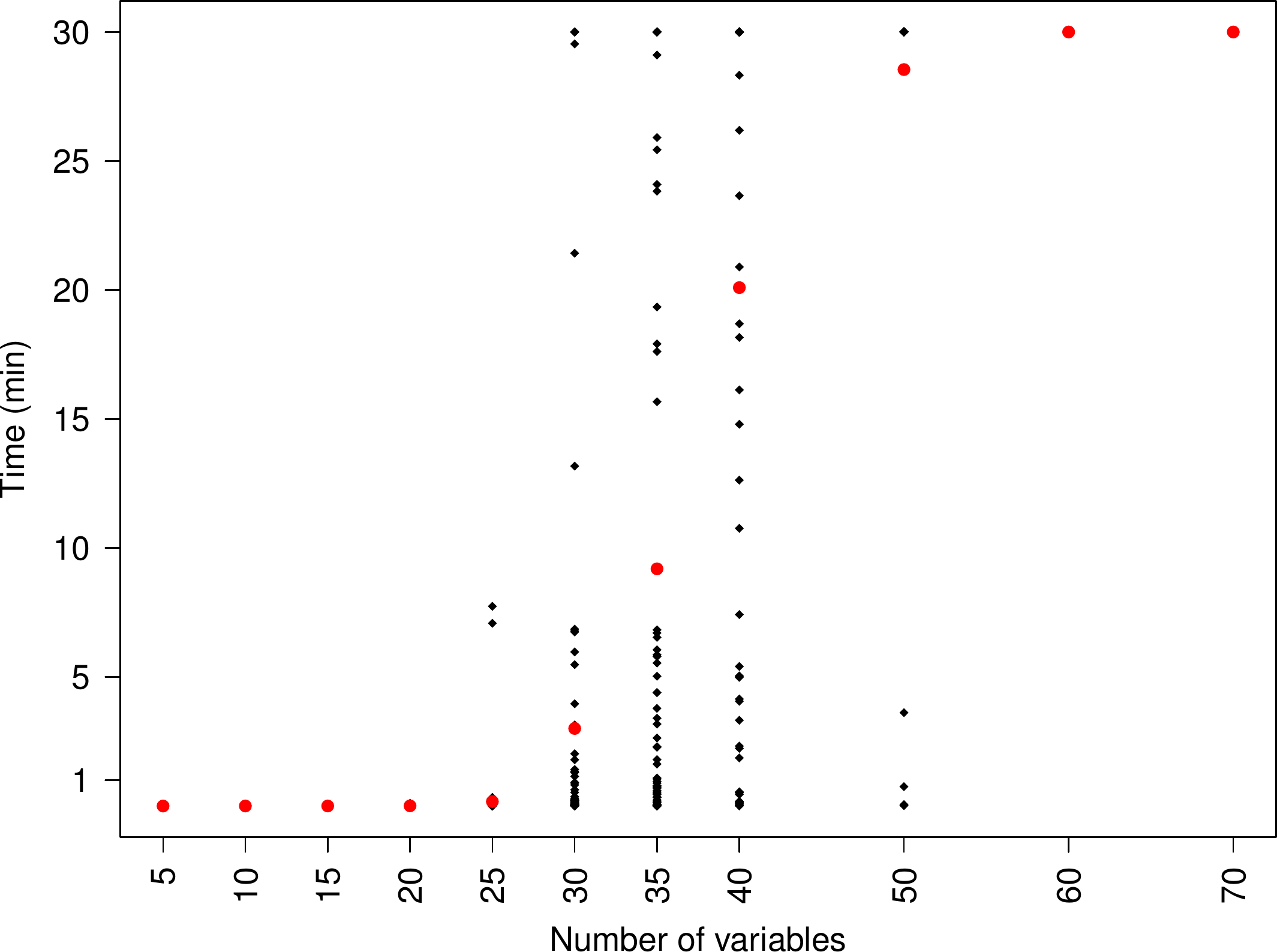}
\caption{\label{fig:scalability_or}Scalability of or-groups computation w.r.t the number of variables}
\end{figure}

The results show that the computation of or-groups quickly becomes time consuming. The 30 minutes timeout is reached with matrices containing only 30 variables. With at least 60 variables, the timeout is systematically reached.
Actually, or-groups represent a small part of the AFD. For example, only one or-group was found in the Linux feature model~\cite{she2011}.
Dedicating so much time for such a small contribution to the feature model is not worth it.
Therefore, we deactivated the computation of or-groups in the following experiments.

\subsubsection{Scalability w.r.t the number of variables}
To evaluate the scalability with respect to the number of variables, we perform the synthesis of random matrices with 1000 configurations, a maximum domain size of 10 and a number of variables ranging from 5 to 2000. 
In Figure~\ref{fig:scalability_variables}, we present the square root of the time needed for the whole synthesis compared to the number of variables.

\begin{figure}
\centering
\includegraphics[scale=0.4]{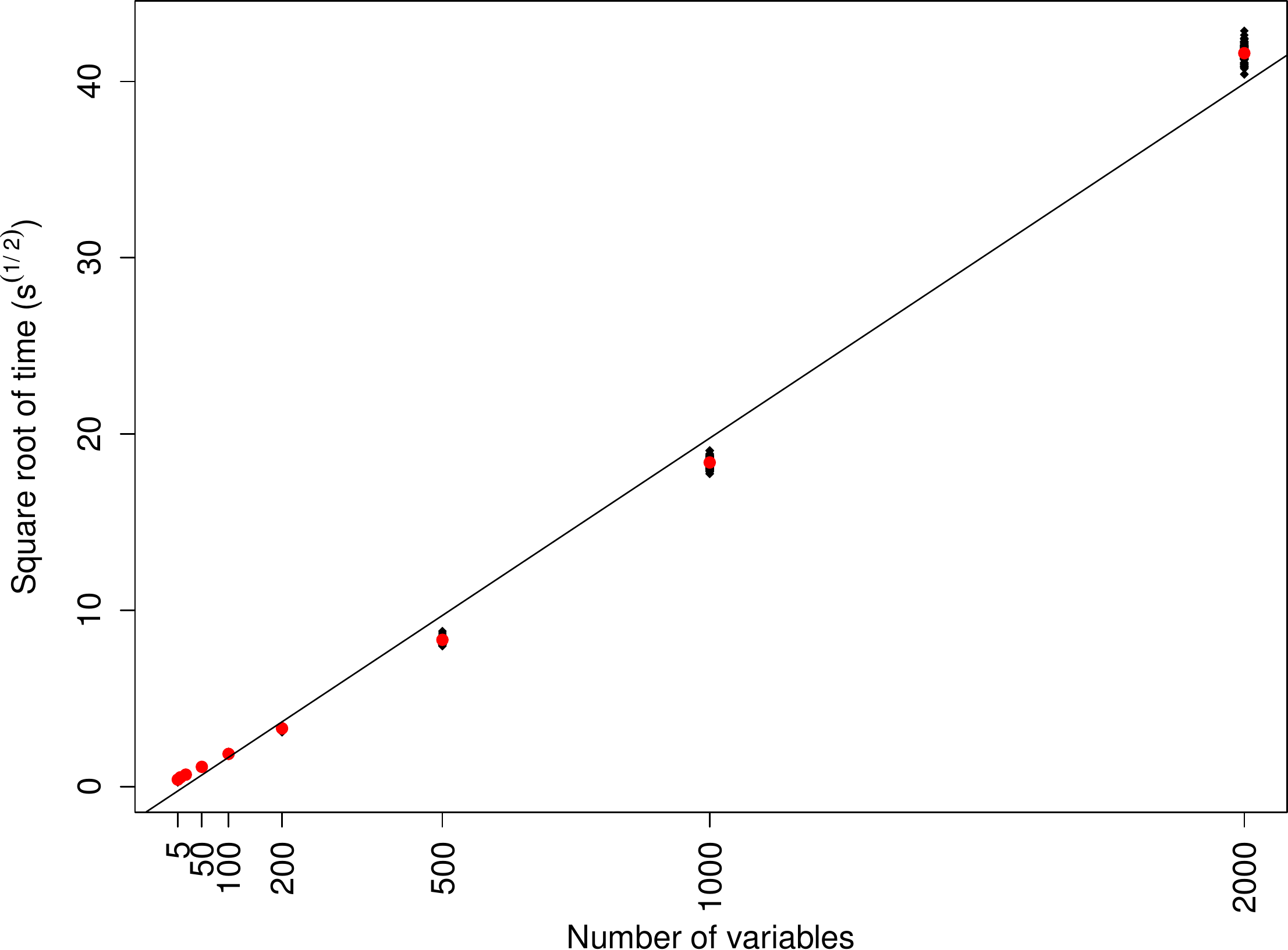}
\caption{\label{fig:scalability_variables}Scalability w.r.t the number of variables}
\end{figure}

The results indicate that the square root of the time grows linearly with the number of variables, with a correlation coefficient of 0.997. 
We observe a difference between pratical and theoretical complexity.
As mentioned in Section~\ref{sec:expected_complexity}, several factors can explain this lower complexity.
Overall, the results of this experiment are consistent with the computed theoretical complexity. 

\subsubsection{Scalability w.r.t the number of configurations}
To evaluate the scalability with respect to the number of configurations, we perform the synthesis of random matrices with 100 variables, a maximum domain size of 10 and a number of configurations ranging from 5 to 200,000. 
With 100 variables, and 10 as the maximum domain size, we can generate $10^{100}$ distinct configurations. This number is big enough to ensure that our generator can randomly generate 5 to 200,000 distinct configurations. 
The time needed for the synthesis is presented in Figure~\ref{fig:scalability_configurations}.

\begin{figure}
\centering
\includegraphics[scale=0.4]{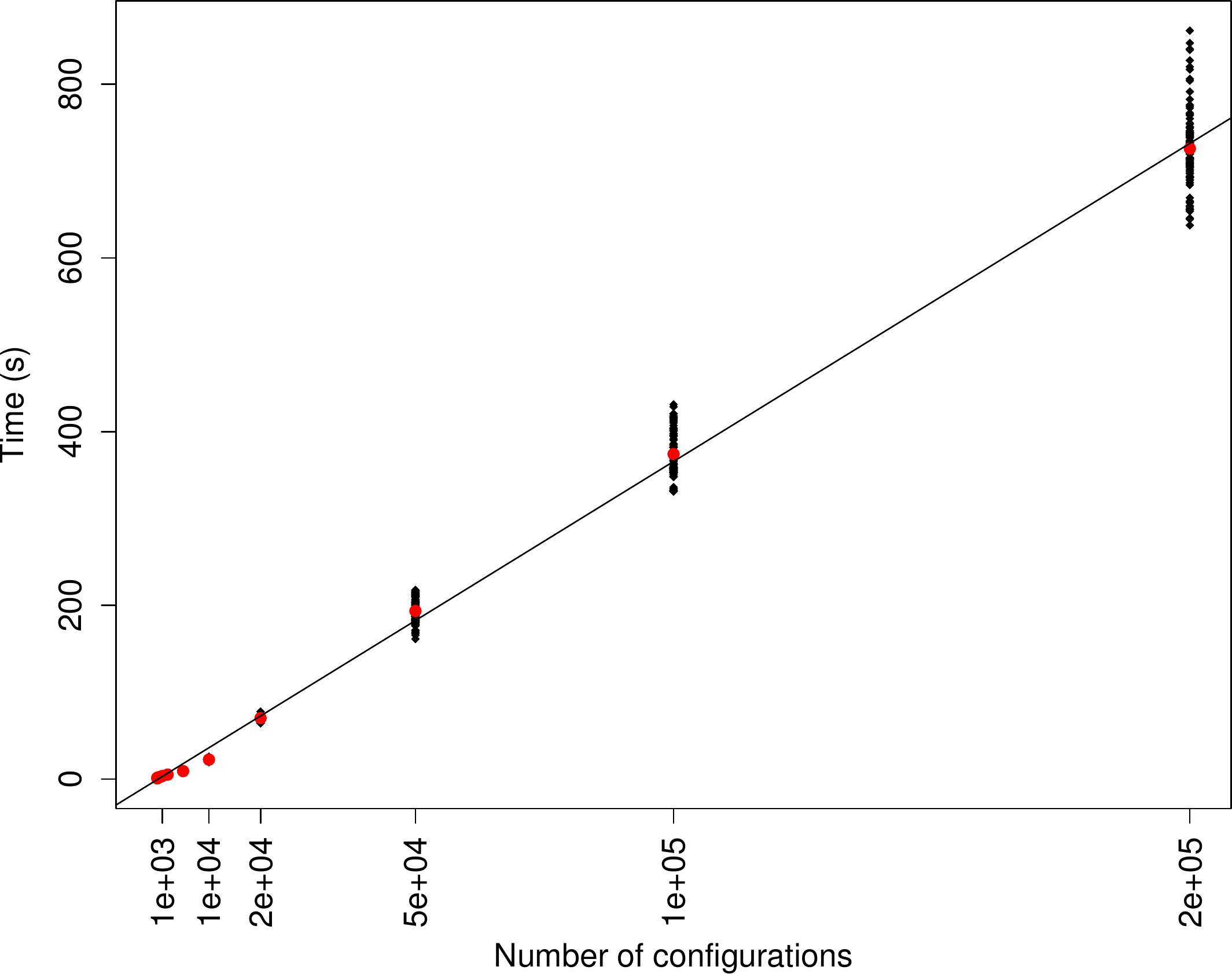}
\caption{\label{fig:scalability_configurations}Scalability w.r.t the number of configurations}
\end{figure}

The results indicate that the time grows linearly with the number of configurations. 
Again, the correlation coefficient of 0.997 confirms that the practical time complexity is consistent with the theoretical complexity.

\subsubsection{Scalability w.r.t the maximum domain size}

\begin{table}
\centering
\small
\begin{tabular}{|c|c|c|c|c|c|c|c|c|c|c|c|}
\hline 
\textbf{Parameter} & 5 & 10 & 20 & 50 & 100 & 200 & 500 & 1,000 & 2,000 & 5,000 & 10,000 \\ 
\hline 
\textbf{Generated} & 5 & 10 & 20 & 50 & 100 & 200 & 500 & 1,000 & 1,995 & 4,385 & 6,416 \\ 
\hline  
\end{tabular} 
\caption{\label{tab:max_domain_size}Difference between the intended maximum domain size and the effective value}
\end{table}

To evaluate the scalability wit respect to the maximum domain size, we perform the synthesis of random matrices with 10 variables, 10,000 configurations and a maximum domain size ranging from 5 to 10,000. The first row of Table~\ref{tab:max_domain_size} lists the intended values for the maximum domain size, used as input to the matrix generator. The second row of the table shows the effective maximum domain size in each case. 
In Figure~\ref{fig:scalability_domain_size}, we present the square root of the time needed for the synthesis.

\begin{figure}
\centering
\includegraphics[scale=0.4]{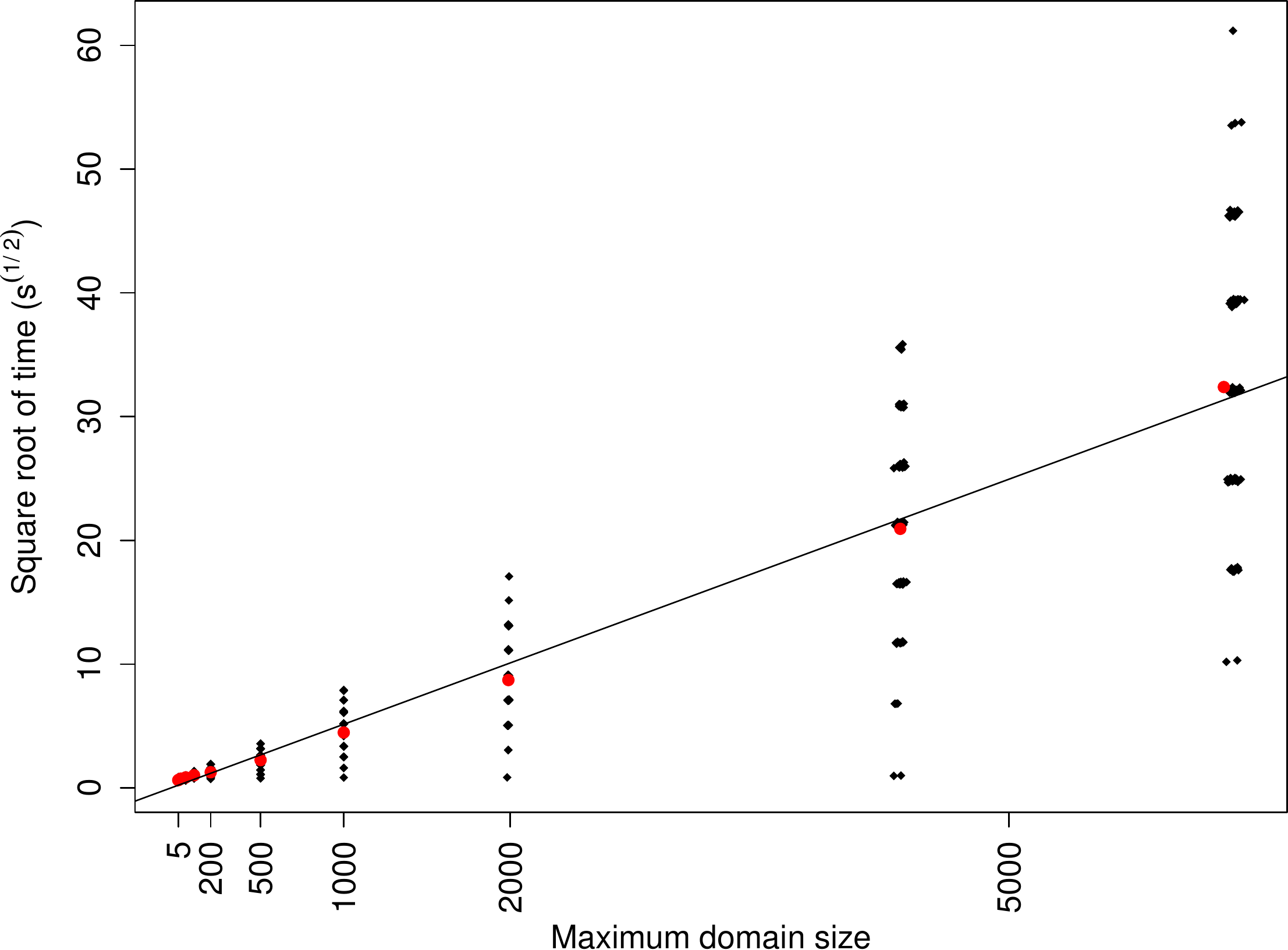}
\caption{\label{fig:scalability_domain_size}Scalability w.r.t the maximum domain size}
\end{figure}

As shown in the second row of Table~\ref{tab:max_domain_size}, and depicted in Figure~\ref{fig:scalability_domain_size}, maximum domain sizes 5000 and 10,000 were not reached during the experiment. 
This does not impact the validity of our results as we show the value of the domain size for the output of the generator and not for its parameters.

We also notice that for each value of the domain size, the points are distributed in small groups. For instance, we can see nine groups of points for a maximum domain size of 2000. 
Each group represents the execution of our algorithm with matrices that have the same number of attributes.
However, we see that the number of attributes does not significantly affect the maximum domain size (the maximum domain size is approximately the same for all groups of results). 


The results indicate that the square root of the time grows linearly with the maximum domain size. 
The correlation coefficient of 0.932 confirms that the practical time complexity is consistent with the theoretically computed time complexity.

\subsubsection{Time Complexity Distribution}

To further understand the practical time complexity of the algorithm, we analyze its distribution over different steps of the algorithm. In Figure~\ref{fig:time_distrib}, we depict the average distribution for all previous experiments that do not contain the computation of or-groups.

The results clearly show that the major part of the algorithm is spent on the computation of binary implications, which has a complexity of $O(v^2.c.d)$ as shown in Section~\ref{sec_CompBI}, and complex constraints for $RC$, which has a complexity of $O(v^4.d + v^2.d^2)$ as shown in Section~\ref{sec_CompRC}. 
The rest of the synthesis represents less than 10\% of the total duration.
Optimizing these two main steps would significantly decrease the time necessary for synthesizing an AFM.

\begin{figure}
\centering
\includegraphics[scale=0.4]{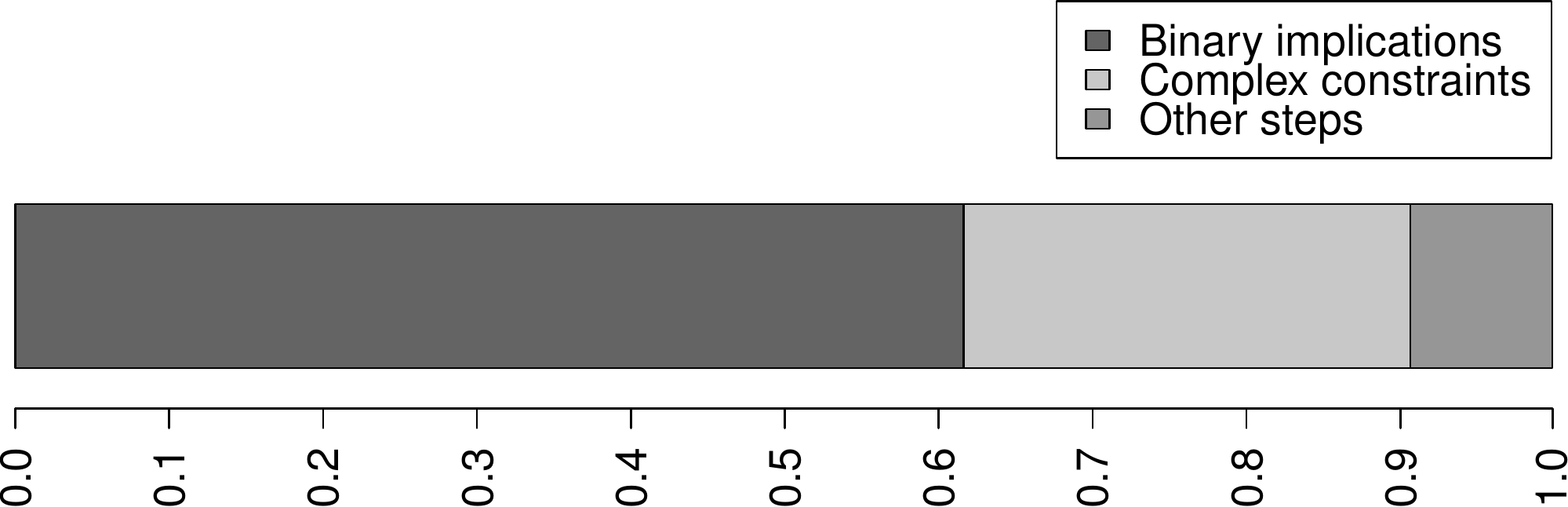}
\caption{\label{fig:time_distrib}Time complexity distribution for all experiments without or-groups computation}
\end{figure}

\section{Threats to Validity} 
\label{sec:threats}



An external threat is that our evaluation is based on the generation of random matrices. Using such matrices may not reflect the practical complexity of our algorithm with a realistic data set. We leave this kind of evaluation as future work.

Evaluating the scalability on a cluster of computers instead of a single one may impact the scalability results and is an internal threat to validity. 
We limited this threat to validity by using a cluster composed of identical nodes. 
Even if the nodes do not represent a standard computer, we only modify the absolute values of the experiments. The practical complexity of the algorithm is not influenced by this gain of processing power. Moreover, all the necessary data for the experiments are present in the local disks of the nodes thus avoiding any network related issue.
Finally, we performed 100 runs for each set of parameters in order to reduce any variation of performance.

Another threat to internal validity is related to our implementation of the algorithm. 
To check the correctness of the implementation, we have manually reviewed some resulting AFMs. We also tested the algorithm against a set of manually designed configuration matrices. Each matrix represents a minimal example of a construct of an AFM (\eg one of the matrices represents an AFM composed of a single xor-group). The test suite covers all the concepts in an AFM. None of these experiments revealed any anomalies in our implementation.


\section{Conclusion}
\label{sec:conclusion}
We presented the foundations for synthesizing attributed feature models (AFMs) from product descriptions. 
We introduced the formalism of \emph{configuration matrix} for documenting a set of products along different Boolean and numerical values. 
We then sought to understand the relationship between configuration matrices and AFMs. The key contributions of the report can be summarized as follows:
\begin{itemize}
\item We described formal properties of AFMs  and established semantic correspondences with the formalism of configuration matrices. We demonstrated why an attributed feature \textit{diagram} (i.e., the diagrammatic representation of an AFM typically read and maintain by a human) may represent an overapproximation of the configuration matrix; 
\item We designed and implemented a comprehensive, parametrizable synthesis algorithm. We showed that the algorithm is (1) sound and complete w.r.t the configuration semantics of the input configuration matrix (2) synthesizes maximal feature diagrams; 
\item We theoretically and empirically evaluated the scalability of the synthesis algorithm. 
\end{itemize}

Numerous kinds of artefacts and problems are amenable to configuration matrices, opening perspectives for applying the synthesis of AFMs. 
As future work, we plan to further study some properties of the synthesis -- like scalability (performance), readability and usefulness of computed constraints, and the over-approximation effect. We are currently investigating the use of AFM synthesis in practical settings. 



\section*{Acknowledgements}
The second author is funded by the Research Council of Norway (the ModelFusion Project - NFR 205606).

Experiments presented in this report were carried out using the Grid'5000 experimental testbed, being developed under the INRIA ALADDIN development action with support from CNRS, RENATER and several Universities as well as other funding bodies (see https://www.grid5000.fr).

\bibliographystyle{IEEEtran}
\bibliography{ICSE15}

\end{document}